\newcommand{\ket}[1]{\left| #1 \right\rangle}
\newcommand{\bra}[1]{\left \langle #1 \right |}
\newcommand{\ketbra}[2]{| #1 \rangle\! \langle #2 |}
\DeclareMathOperator{\Tr}{Tr}
\DeclareMathOperator{\Ad}{Ad}
\DeclareMathOperator{\Stab}{Stab}
\DeclareMathOperator{\diag}{diag}
\DeclareMathOperator{\Trenv}{Tr_{e}}
\DeclareMathOperator{\UCent}{UCent}
\DeclareMathOperator{\Comm}{Comm}
\DeclareMathOperator{\Bicomm}{Bicomm}
\DeclareMathOperator{\Span}{Span}
\DeclareMathOperator{\LCM}{lcm}
\DeclareMathOperator{\id}{id}
\newcommand{\identity}{\mathds{1}}
\newcommand{\rmd}{\mathrm{d}}
\newcommand{\HS}{\mathcal{H}}
\newcommand{\Hs}{\HS_{\textsc{s}}}
\newcommand{\He}{\HS_{\textsc{b}}}
\newcommand{\Hse}{\Hs\otimes \He}
\newcommand{\PUH}{\mathrm{PU}(\HS)}
\newcommand{\UH}{\mathrm{U}(\HS)}
\newcommand{\Ue}{\mathrm{U}(\He)}
\newcommand{\BL}{\mathcal{B}}
\newcommand{\BH}{\BL(\HS)}
\newcommand{\BBH}{\BL\big(\BH\big)}
\newcommand{\Bs}{\BL(\Hs)}
\newcommand{\dH}{\mathfrak{d}}
\newcommand{\dHs}{\mathfrak{d}_{\textsc{s}}}
\newcommand{\dHe}{\mathfrak{d}_{\textsc{b}}}
\newtheorem{theorem}{Theorem}
\newtheorem{lemma}{Lemma}
\newtheorem{definition}{Definition}
\newtheorem{example}{Example}
\newtheorem{conjecture}{Conjecture}
\newtheorem{question}{Question}
\newtheorem{remark}{Remark}
\begin{document}
\date{ \today}
\author{Jason M. Dominy$^{(1,4)}$, Lorenzo Campos Venuti$^{(3,4)}$, Alireza Shabani$^{(5)}$, and Daniel A. Lidar$^{(1, 2,3,4)}$}
\affiliation{Departments of $^{(1)}$Chemistry, $^{(2)}$Electrical Engineering, and $^{(3)} $Physics, $^{(4)}$Center for Quantum Information Science \& Technology, University of Southern California, Los Angeles, CA 90089, USA\\
$^{(5)}$Department of Chemistry, University of California, Berkeley, CA 94720, USA}
\title{Evolution prediction from tomography}
\newcommand{\jmdstack}[2]{\genfrac{}{}{0pt}{}{#1}{#2}}

\begin{abstract}
Quantum process tomography provides a means of measuring the evolution operator for a system at a fixed measurement time $t$.  The problem of using that tomographic snapshot to predict the evolution operator at other times is generally ill-posed since there are, in general, infinitely many distinct and compatible solutions.  We describe the prediction, in some ``maximal ignorance" sense, 
of the evolution of a quantum system 
based on knowledge only of the evolution operator for finitely many times $0<\tau_{1}<\dots<\tau_{M}$ with $M\geq 1$.  To resolve the ill-posedness problem, we construct this prediction as the result of an average over some unknown (and unknowable) variables. The resulting prediction provides a description of the observer's state of knowledge of the system's evolution at times away from the measurement times. Even if the original evolution is unitary, the predicted evolution is described by a non-unitary, completely positive map.
\bigskip
\begin{quotation}
	\raggedleft
	{\em Prediction is very difficult, especially if it's about the future.}

	-- Niels Bohr
%
%
	
	{\em The future ain't what it used to be.}
	
	-- Yogi Berra
\end{quotation}

\end{abstract}
\maketitle

\section{Introduction}

Predicting the future from observation of the past is arguably one of the main motivations of physics and of science in general.  This task can be formulated as follows.  Imagine having complete knowledge of the system at some given times $\tau_{j}$, $j=1,2,\ldots,M$.  To be precise, assume that the evolution operator at such times is known.  In the quantum setting this task can be achieved using quantum process tomography \cite{Altepeter2003, Mohseni2008, Shabani2011}.  Given this information, is it possible to predict what the state of the system will be at other times? Clearly the laws of physics, such as the Hamilton or Schr\"{o}dinger equation for conservative classical or quantum mechanics, must be assumed \emph{a priori}.  In the classical case it would appear that a limitation for carrying out such a program is the exponential sensitivity on the initial conditions typical of chaotic motion.  It turns out that the main obstacle to identifying the dynamical equations from experimental data is in fact the (large) system dimensionality \cite{Cecconi2012, Cubitt2012}.  

In the quantum case, estimating the generator of the dynamics from observational data is an ill-conditioned problem, although techniques have been developed to exploit  complete positivity to alleviate this problem.  These techniques have been used to estimate the dynamics of a two-qubit system from liquid-state NMR data \cite{Boulant2003}.  However, generalizing these methods to larger Hilbert spaces turns out to be challenging, and perhaps even hopeless.  To understand the source of the difficulty, consider the case where only one snapshot is available and the evolution operator $U^{(1)}$ at time $\tau_{1}$ has been obtained.  Estimating the Hamiltonian by inverting the relation $U^{(1)}=e^{-i\tau_{1}H}$ results in infinitely many solutions corresponding to the multiple branches of the logarithm. As we will show, knowledge of the evolution operator at another time, rationally independent from $\tau_{1}$, in principle reduces the number of solutions to a single one.  However, even this approach may be infeasible.  On the one hand, solving the combinatorial problem of finding the correct branch of the logarithm for each eigenvalue has recently been shown to be NP-hard \cite{Cubitt2012}.  
On the other hand, even if this solution could be found there always exist infinitely many Hamiltonian solutions, arbitrarily far from one another, yielding evolution operators at the prescribed measurement times that are arbitrarily close to the measured evolutions.  This problem seems to become insurmountable when one realizes that $U^{(j)}$ and $\tau_{j}$ are necessarily known only to limited precision, so that the problem \emph{per se} has infinitely many solutions.  

Here we take a different approach.  Our strategy is to keep track of all these infinitely many solutions. 
It turns out that each solution can be characterized by a set of integers.  Our ignorance of the precise value of these integers reflects our lack of knowledge of the input data.  We therefore encode our ignorance of these integers into a prior distribution for the allowed evolutions.  By averaging all the allowed dynamical evolutions over this distribution, we obtain a unique quantum evolution that interpolates between the known snapshots of the dynamics. We work this out in detail in the case of closed quantum system dynamics in Section~\ref{sec:closed}, and present some preliminary results for the case of open quantum systems in Section~\ref{sec:open}. We conclude in Section~\ref{sec:conc}.

\section{Predicting Closed System Dynamics}
\label{sec:closed}

\subsection{Consistent Unitary Propagators and Admissible Hamiltonians}

We begin by considering a finite set of measurement times and a set of unitary evolution operators for a closed system, measured by quantum process tomography at these times.  We characterize the set of all Hamiltonian operators capable of generating the given evolution operators at the given times.  The solution to this problem will turn out to hinge on whether or not the times are rational ratios of one another.

\begin{definition}
	Given measurement times $0<\tau_{1}<\tau_{2}<\dots<\tau_{M}$, a corresponding set of unitary propagators $\{U^{(j)}\}\subset\UH$ will be called \emph{consistent} if there exists at least one Hamiltonian $H_{0}$ such that $U^{(j)} = e^{-\tau_{j}H_{0}}$ for all $j=1,\dots, M$.  Any Hamiltonian $H$ such that $U^{(j)} = e^{-i\tau_{j}H}$ for all $j=1,\dots, M$ will be called \emph{admissible}.
\end{definition}

	It follows that a consistent set must be mutually commutative, and therefore define a decomposition $\HS = \bigoplus_{i=1}^{\kappa}V_{i}$ into the maximal shared eigenspaces of the $\{U^{(j)}\}$, under which $U^{(j)}$ decomposes as $U^{(j)} = \bigoplus_{i=1}^{\kappa} \lambda_{i}^{(j)}\identity_{V_{i}}$.  Moreover, any admissible Hamiltonian $H$ must lie in the commutant of $\{U^{(j)}\}$ and therefore must be block diagonal with respect to this decomposition, i.e., $H = \bigoplus_{i=1}^{\kappa} H_{i}$ where $H_{i}\in\BL(V_{i})$ is Hermitian.

\begin{lemma}
	Given measurement times $0<\tau_{1}<\tau_{2}<\dots<\tau_{M}$ and a consistent set of unitary propagators $\{U^{(i)}\}$, there exists an admissible Hamiltonian $\hat{H}$ in the bicommutant (i.e., double commutant, i.e., double centralizer) of $\{U^{(i)}\}$,  $\hat{H} = \bigoplus_{i=1}^{\kappa}\hat{H}_{i} = \bigoplus_{i=1}^{\kappa}\hat{h}_{i}\identity_{V_{i}}$, where $\hat{h}_{i}\in\mathbb{R}$.
\end{lemma}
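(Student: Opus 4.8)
The plan is to produce $\hat{H}$ by starting from the Hamiltonian guaranteed by consistency and replacing each of its diagonal blocks by a scalar. By definition of consistency there is some $H_{0}$ with $U^{(j)}=e^{-i\tau_{j}H_{0}}$ for all $j$, and, as observed just before the lemma, $H_{0}$ lies in the commutant of $\{U^{(i)}\}$ and hence is block diagonal with respect to the maximal-shared-eigenspace decomposition $\HS=\bigoplus_{i=1}^{\kappa}V_{i}$: write $H_{0}=\bigoplus_{i=1}^{\kappa}H_{0,i}$ with each $H_{0,i}$ Hermitian on $V_{i}$. Restricting the identity $U^{(j)}=e^{-i\tau_{j}H_{0}}$ to the block $V_{i}$ gives $e^{-i\tau_{j}H_{0,i}}=\lambda_{i}^{(j)}\identity_{V_{i}}$ for every $j=1,\dots,M$.

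Next I would extract a scalar from each block. Fix $i$ and let $\hat{h}_{i}\in\mathbb{R}$ be any eigenvalue of the Hermitian operator $H_{0,i}$, with eigenvector $v$; applying the relation $e^{-i\tau_{j}H_{0,i}}=\lambda_{i}^{(j)}\identity_{V_{i}}$ to $v$ yields $e^{-i\tau_{j}\hat{h}_{i}}=\lambda_{i}^{(j)}$ for all $j$ (equivalently, since $e^{-i\tau_{j}H_{0,i}}$ is already a scalar operator, every eigenvalue of $H_{0,i}$ has this property). Now set $\hat{H}:=\bigoplus_{i=1}^{\kappa}\hat{h}_{i}\identity_{V_{i}}$, which is manifestly Hermitian. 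On each block, $e^{-i\tau_{j}\hat{H}}$ restricts to $e^{-i\tau_{j}\hat{h}_{i}}\identity_{V_{i}}=\lambda_{i}^{(j)}\identity_{V_{i}}$, which is precisely the restriction of $U^{(j)}$ to $V_{i}$; hence $\hat{H}$ is admissible.

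It remains to locate $\hat{H}$ in the bicommutant. Since the $V_{i}$ are the \emph{joint} eigenspaces of the commuting family $\{U^{(j)}\}$ and distinct blocks carry distinct eigenvalue tuples $(\lambda_{i}^{(1)},\dots,\lambda_{i}^{(M)})$, an operator commutes with every $U^{(j)}$ exactly when it leaves each $V_{i}$ invariant; thus $\Comm(\{U^{(i)}\})=\bigoplus_{i=1}^{\kappa}\BL(V_{i})$. The commutant of this algebra --- a direct sum of full matrix algebras --- is its center $\bigoplus_{i=1}^{\kappa}\mathbb{C}\,\identity_{V_{i}}$, so $\Bicomm(\{U^{(i)}\})=\{\bigoplus_{i=1}^{\kappa}c_{i}\identity_{V_{i}}:c_{i}\in\mathbb{C}\}$, which contains $\hat{H}$. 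The argument is essentially bookkeeping, so there is no serious obstacle; the one point requiring a little care is this last computation --- checking that the joint eigenspace decomposition is fine enough that the commutant is exactly the block-diagonal algebra, so that its own commutant is exactly the algebra of scalar-block operators. If one prefers to avoid the spectral decomposition of $H_{0}$ altogether, the same construction goes through by observing that consistency directly guarantees a real solution $x$ of the system $e^{-i\tau_{j}x}=\lambda_{i}^{(j)}$, $j=1,\dots,M$, for each $i$.
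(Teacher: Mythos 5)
Your proof is correct and follows essentially the same route as the paper's: take any admissible $H_{0}$, pick one eigenvalue $\hat{h}_{i}$ of each block $H_{0,i}$, note that the scalarity of $e^{-i\tau_{j}H_{0,i}}=\lambda_{i}^{(j)}\identity_{V_{i}}$ forces $e^{-i\tau_{j}\hat{h}_{i}}=\lambda_{i}^{(j)}$, and set $\hat{H}=\bigoplus_{i}\hat{h}_{i}\identity_{V_{i}}$. The only difference is that you spell out the identification $\Comm(\{U^{(j)}\})=\bigoplus_{i}\BL(V_{i})$ and $\Bicomm(\{U^{(j)}\})=\bigoplus_{i}\mathbb{C}\identity_{V_{i}}$, which the paper leaves implicit; that added detail is accurate and harmless.
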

\begin{proof}
	Let $H_{0} = \bigoplus_{i=1}^{\kappa}H_{i}^{0}$ be any admissible Hamiltonian, and for each $i=1,\dots,\kappa$, let $\hat{h}_{i}\in\mathbb{R}$ be any eigenvalue of $H_{i}^{0}$.  Then the statement that $H_{0}$ is admissible is equivalent to the statement that $e^{-i\tau_{j}H_{i}^{0}} = \lambda_{i}^{(j)}\identity_{V_{i}}$ for each $i$ and $j$.  It follows that $e^{-i\tau_{j}\hat{h}_{i}} = \lambda_{i}^{(j)}$ for each $i$ and $j$.  Therefore $\hat{H} = \bigoplus_{i=1}^{\kappa}\hat{h}_{i}\identity_{V_{i}}$ is an admissible Hamiltonian in the bicommutant of $\{U^{(j)}\}$.
\end{proof}

We can now give a full characterization of the set of all admissible Hamiltonians for a set of measurement times and consistent unitary propagators.
 
\begin{lemma}
	For some $M\geq 1$, let $0<\tau_{1}<\tau_{2}<\dots<\tau_{M}$ be a set of time points and let $\{U^{(j)}\}\subset\UH$ be a corresponding consistent set of unitary propagators, with admissible Hamiltonian $\hat{H}$ in the bicommutant of $\{U^{(j)}\}$.  If the times $\{\tau_{j}\}$ are \emph{not} rationally related (at least one $\tau_{r}/\tau_{s}$ is irrational), then $\hat{H}$ is the unique possible Hamiltonian for these propagators.  If the $\{\tau_{j}\}$ are rationally related (all $\tau_{r}/\tau_{s}\in\mathbb{Q}$), then the set of admissible Hamiltonians is infinite in cardinality and given by 
	\begin{equation}
		\left\{H = \hat{H} + \frac{2\pi \LCM\{q_{j}\}}{\tau_{1}}\bigoplus_{i=1}^{\kappa}K_{i}\;:\; \text{ for all } K_{i}\in\BL(V_{i}) \text{ Hermitian with eigenvalues in }\mathbb{Z}\right\},
	\end{equation}
	where $q_{j}$ is defined by $\frac{\tau_{j}}{\tau_{1}} = \frac{p_{j}}{q_{j}}$ in normal form, and $\LCM\{q_{j}\}$ is the least common multiple of $\{q_{1},\dots,q_{M}\}$.
\label{lem:admissibleHamiltonians}
\end{lemma}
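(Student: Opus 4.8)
The plan is to peel the problem apart block by block and reduce it to a purely number-theoretic question: which real numbers can occur as an eigenvalue of the difference between an admissible Hamiltonian and $\hat H$? As observed just before the statement, every admissible $H$ is block diagonal, $H=\bigoplus_{i=1}^{\kappa}H_i$ with each $H_i\in\BL(V_i)$ Hermitian, and $\hat H=\bigoplus_i\hat h_i\identity_{V_i}$. So I would write $H_i=\hat h_i\identity_{V_i}+D_i$ with $D_i\in\BL(V_i)$ Hermitian. Since $\hat h_i\identity_{V_i}$ is central, $e^{-i\tau_j H_i}=e^{-i\tau_j\hat h_i}\,e^{-i\tau_j D_i}=\lambda_i^{(j)}\,e^{-i\tau_j D_i}$, where the last equality uses admissibility of $\hat H$. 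Hence $H$ is admissible if and only if $e^{-i\tau_j D_i}=\identity_{V_i}$ for every $i$ and every $j$.

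Next I would diagonalize: writing $D_i=\sum_a d_a P_a$ with the $d_a\in\mathbb{R}$ its distinct eigenvalues and $P_a\neq 0$ the spectral projectors, $e^{-i\tau_j D_i}=\sum_a e^{-i\tau_j d_a}P_a$ equals $\identity_{V_i}$ exactly when $e^{-i\tau_j d_a}=1$ for all $a$, i.e.\ $\tau_j d_a\in 2\pi\mathbb{Z}$. Thus the condition on $D_i$ is precisely that $\mathrm{spec}(D_i)$ lie in the additive subgroup $L:=\bigcap_{j=1}^{M}\frac{2\pi}{\tau_j}\mathbb{Z}\subseteq\mathbb{R}$, and conversely any Hermitian $D_i$ with spectrum in $L$ gives an admissible $H$. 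So the set of admissible Hamiltonians is $\hat H+\bigoplus_{i=1}^{\kappa}\{D_i\in\BL(V_i)\text{ Hermitian}:\mathrm{spec}(D_i)\subseteq L\}$, and the whole lemma comes down to identifying $L$.

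It then remains to compute $L$ in the two cases. If some $\tau_r/\tau_s$ is irrational, any nonzero $d\in L$ would satisfy $d=\frac{2\pi}{\tau_r}m=\frac{2\pi}{\tau_s}n$ with nonzero integers $m,n$, forcing $\tau_s/\tau_r=n/m\in\mathbb{Q}$ --- a contradiction; so $L=\{0\}$, every $D_i=0$, and $H=\hat H$ is unique. If instead all $\tau_r/\tau_s\in\mathbb{Q}$, I would write $\tau_j/\tau_1=p_j/q_j$ in lowest terms, so that $p_1=q_1=1$, $\gcd(p_j,q_j)=1$, and $\frac{2\pi}{\tau_j}\mathbb{Z}=\frac{2\pi}{\tau_1}\frac{q_j}{p_j}\mathbb{Z}$; setting $x=\frac{\tau_1}{2\pi}d$, membership in $L$ becomes $x\in\bigcap_j\frac{q_j}{p_j}\mathbb{Z}$. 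The $j=1$ term forces $x\in\mathbb{Z}$; then for each $j$, $x=\frac{q_j}{p_j}k_j$ with $k_j\in\mathbb{Z}$ means $q_j\mid p_j x$, and $\gcd(p_j,q_j)=1$ gives $q_j\mid x$, hence $\LCM\{q_j\}\mid x$. The reverse inclusion is immediate since $q_j\mid\LCM\{q_j\}$ makes $\frac{p_j}{q_j}\LCM\{q_j\}\in\mathbb{Z}$. So $L=\frac{2\pi\LCM\{q_j\}}{\tau_1}\mathbb{Z}$; a Hermitian operator has spectrum in $L$ iff it equals $\frac{2\pi\LCM\{q_j\}}{\tau_1}K_i$ for some Hermitian $K_i$ with integer eigenvalues, which yields the displayed formula, and taking $K_i=n\identity_{V_i}$, $n\in\mathbb{Z}$, exhibits infinitely many admissible Hamiltonians.

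The only genuinely delicate step is the computation of $L$ in the rational case: a priori $L$ is merely some additive subgroup of $\mathbb{R}$ that could in principle contain irrationals, so one must first normalize by $\tau_1$ and use the $j=1$ relation (where $p_1=q_1=1$) to force every element of $L$ to be an integer multiple of $2\pi/\tau_1$; only then does the coprimality-plus-lcm argument close the computation. Everything else --- the block decomposition, the central-factor splitting of the exponential, and the spectral-theorem reduction --- is routine.
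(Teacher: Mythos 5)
Your proposal is correct and follows essentially the same route as the paper's proof: reduce to the eigenvalue condition $\tau_{j}(h_{i}-\hat{h}_{i})\in 2\pi\mathbb{Z}$ block by block, derive a contradiction with irrationality of some $\tau_{r}/\tau_{s}$ in the first case, and in the rational case use the $j=1$ condition for integrality followed by the coprimality-plus-lcm argument. Packaging the condition as the subgroup $L=\bigcap_{j}\frac{2\pi}{\tau_{j}}\mathbb{Z}$ is only a cosmetic reorganization of the same argument, though it does make the converse inclusion (that every operator in the displayed set is admissible) more explicit than the paper does.
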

\begin{proof}
	Write $\hat{H}$ as $\hat{H} = \bigoplus_{i=1}^{\kappa}\hat{h}_{i}\identity_{V_{i}}$, and let $H=\bigoplus_{i=1}^{\kappa}H_{i}$ be any admissible Hamiltonian.  Then for each $i=1,\dots,\kappa$ and each $j=1,\dots,M$, 
	\begin{equation}
		e^{-i\tau_{j}H_{i}} = e^{-i\tau_{j}\hat{h}_{i}}\identity_{V_{i}} = \lambda_{i}^{(j)}\identity_{V_{i}}. 
	\end{equation}
	Then $e^{-i\tau_{j}(H_{i} - \hat{h}_{i}\identity_{V_{i}})} = \identity_{V_{i}}$.  So if $h_{i}$ is any eigenvalue of $H_{i}$, then $\tau_{j}(h_{i}-\hat{h}_{i})/2\pi \in\mathbb{Z}$ for all $i$ and $j$.  Therefore, either $H_{i} = \hat{h}_{i}\identity_{V_{i}}$ for all $i=1,\dots,\kappa$, or for some $i$ there exists an eigenvalue $h_{i}$ of $H_{i}$ different from $\hat{h}_{i}$ and therefore $\tau_{r}/\tau_{s} \in\mathbb{Q}$ for all $r$ and $s$.  So if the times are not rationally related, then $H = \hat{H}$ is the unique admissible Hamiltonian.  Suppose, on the other hand, that the times are rationally related. For each $j=1,\dots,M$, let $p_{j},q_{j}\in\mathbb{Z}_{+}$ be the unique relatively prime pair of positive integers such that $\tau_{j}/\tau_{1} = p_{j}/q_{j}$.  Then for each $j$, $\frac{\tau_{j}(h_{i}-\hat{h}_{i})}{2\pi}  = \frac{p_{j}}{q_{j}}\frac{\tau_{1}(h_{i}-\hat{h}_{i})}{2\pi}\in\mathbb{Z}$.   Both $\frac{\tau_{1}(h_{i}-\hat{h}_{i})}{2\pi}\in\mathbb{Z}$ and $\frac{p_{j}}{q_{j}}\frac{\tau_{1}(h_{i}-\hat{h}_{i})}{2\pi}\in\mathbb{Z}$ implies that $\frac{\tau_{1}(h_{i}-\hat{h}_{i})}{2\pi}\in q_{j}\mathbb{Z}$ for all $j$, which implies that $\frac{\tau_{1}(h_{i}-\hat{h}_{i})}{2\pi}\in q\mathbb{Z}$, where $q = \LCM\{q_{j}\}$.  Therefore $H_{i} - \hat{h}_{i}\identity_{V_{i}} = \frac{2\pi q}{\tau_{1}}K_{i}$ where $K_{i}\in\BL(V_{i})$ is Hermitian with spectrum in $\mathbb{Z}$.
\end{proof}

\subsection{Evolution Prediction Against a Prior ``Energy'' Distribution}
We now turn to the issue of deriving the predicted evolution operator 	$\Psi^{\{U^{(j)}\}}_{t}$ from a set of rationally-related measurement times $0<\tau_{1}<\tau_{2}<\dots<\tau_{M}$ and associated unitary operators $\{U^{(j)}\}$.  Under these assumptions the predicted evolution operator is an average over all of the possible unitary dynamics compatible with the tomographic data.   The result is non-unitary dynamics, specifically, a sum of a unitary evolution and projections onto the commutant and bi-commutant of the collection of measured unitary operators.  The derivation makes use of an assumption that a prior probability distribution $\mathbb{P}$ may be placed on the spectrum of the underlying Hamiltonian operator such that the eigenvalues are independent identically distributed random variables.  In practice, this distribution should represent the beliefs of user regarding the relative likelihood of dynamics of each possible speed.  The choice of prior distribution appears in the final expression as the modulus squared of the characteristic function of the distribution.  If desired, a completely uniform prior distribution may be realized as the limit of increasingly ``wide'' distributions.  The result of this ``no prior information'' assumption, however, is a predicted evolution which is discontinuous in time owing to the uniform averaging over dynamics of all speeds.

\begin{theorem}
	Fix some $M\geq 1$, let $0<\tau_{1}<\tau_{2}<\dots<\tau_{M}$ be a set of time points such that $s_{j}:=\tau_{j}/\tau_{1}$ is rational for all $j=1,\dots,M$, so that $s_{j} = p_{j}/q_{j}$ in normal form (for each $i$, $p_{i},q_{i}\in\mathbb{Z}$ are relatively prime).  Let $\{U^{(i)}\}$ be a consistent set of unitary operators representing the propagators at these $M$ time points, and let $\hat{H}\in\Bicomm\{U^{(i)}\}$ be the unique admissible Hamiltonian of minimal Hilbert-Schmidt norm with spectrum contained in $(-\pi\LCM\{q_{j}\}/\tau_{1}, \pi\LCM\{q_{j}\}/\tau_{1}]$.  Let the eigenvalues $\vec{k}\in\mathbb{Z}^{\dH}$ of $\bigoplus K_{i}$ in the description of the set of admissible Hamiltonians in Lemma \ref{lem:admissibleHamiltonians} be independent, identically distributed random variables with distribution $\mathbb{P}(\vec{k})= \prod  \mathbb{P}(k_{i})$.  Then for an arbitrary $t>0$, the map $A\mapsto\langle U(t)AU^{\dag}(t)\rangle$ is defined by averaging $e^{-itH}Ae^{itH}$ over this set of admissible Hamiltonians using this distribution, yielding the map $\Psi^{\{U^{(j)}\}}_{t}:A\mapsto \langle U(t)A U^{\dag}(t)\rangle$ for all $A\in\BH$, which is given by 
	\begin{equation}
		\Psi^{\{U^{(j)}\}}_{t}(A) = |\varphi_{\mathbb{P}}(2\pi\gamma t)|^{2}e^{-it\hat{H}}Ae^{it\hat{H}} + \big(1-|\varphi_{\mathbb{P}}(2\pi\gamma t)|^{2}\big)\Big[\Upsilon \mathcal{P}_{\{U^{(j)}\}}^{C}(A) + (\identity - \Upsilon)\mathcal{P}_{\{U^{(j)}\}}^{B}(A)\Big]
	\end{equation}
	for all $t\geq 0$, where $\varphi_{\mathbb{P}}(t)$ is the characteristic function of the distribution $\mathbb{P}$, $\mathcal{P}_{\{U^{(j)}\}}^{C}$ and $\mathcal{P}_{\{U^{(j)}\}}^{B}$ are the orthogonal projectors onto the commutant $\Comm\{U^{(j)}\}\subset \BH$ and bicommutant $\Bicomm\{U^{(j)}\}\subset \BH$ (the commutant of the commutant) of $\{U^{(j)}\}$ when $\BH$ is endowed with the Hilbert-Schmidt inner product, and where $\Upsilon\in\Bicomm\{U^{(j)}\}$ is given by $\Upsilon = \bigoplus_{i=1}^{\kappa}\frac{1}{\mu_{i}+1}\identity_{V_{i}}$ with $\mu_{i}$ the dimension of $V_{i}$.  
	\label{thm:rationalGeneralIID}
\end{theorem}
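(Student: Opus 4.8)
The plan is to turn the average over admissible Hamiltonians into two elementary averages — one over the i.i.d.\ integer eigenvalues and one over a Haar-random eigenbasis inside each block $V_i$ — and then recognise the outcome in terms of the commutant and bicommutant projectors. Put $\gamma:=\LCM\{q_j\}/\tau_1$, so that by Lemma~\ref{lem:admissibleHamiltonians} every admissible Hamiltonian has the form $H=\hat H+2\pi\gamma\bigoplus_{i=1}^{\kappa}K_i$ with each $K_i\in\BL(V_i)$ Hermitian with integer spectrum. I model $K_i=W_iD_iW_i^{\dagger}$, where $D_i=\diag(k_{i,1},\dots,k_{i,\mu_i})$ carries the prescribed i.i.d.\ entries and $W_i$ is Haar-distributed on $\mathrm{U}(V_i)$, everything mutually independent (the eigenbasis being the natural ``maximal ignorance'' choice, consistent with the $\Upsilon$ appearing in the claim). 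Since $e^{-itH}=\bigoplus_i e^{-it\hat h_i}e^{-2\pi i\gamma tK_i}$, writing $A=(A_{i\ell})$ in block form with respect to $\HS=\bigoplus_i V_i$ reduces the whole computation to the block averages $\langle e^{-2\pi i\gamma tK_i}A_{i\ell}e^{2\pi i\gamma tK_\ell}\rangle$.

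For $i\neq\ell$ the average factorises by independence of $K_i$ and $K_\ell$, so only $\langle e^{-2\pi i\gamma tK_i}\rangle$ is needed: averaging over the eigenvalues turns $e^{-2\pi i\gamma tD_i}$ into $\varphi_{\mathbb{P}}(-2\pi\gamma t)\identity_{V_i}$, and the ensuing Haar average over $W_i$ leaves this scalar multiple of the identity unchanged. Hence the $(i,\ell)$ block of the average equals $\varphi_{\mathbb{P}}(-2\pi\gamma t)\varphi_{\mathbb{P}}(2\pi\gamma t)\,e^{-it(\hat h_i-\hat h_\ell)}A_{i\ell}=|\varphi_{\mathbb{P}}(2\pi\gamma t)|^{2}\,(e^{-it\hat H}Ae^{it\hat H})_{i\ell}$ (using $\varphi_{\mathbb{P}}(-s)=\overline{\varphi_{\mathbb{P}}(s)}$ for real-valued $k$), which is already the off-diagonal part of the claimed formula since $\mathcal{P}_{\{U^{(j)}\}}^{C}$ and $\mathcal{P}_{\{U^{(j)}\}}^{B}$ both map into the block-diagonal subalgebra.

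For $i=\ell$ the phases $e^{\mp it\hat h_i}$ cancel and I am left with $\langle e^{-2\pi i\gamma tK_i}A_{ii}e^{2\pi i\gamma tK_i}\rangle$. Conditioning on $W_i$ and setting $B=W_i^{\dagger}A_{ii}W_i$, the eigenvalue average sends the $(m,n)$ entry of $e^{-2\pi i\gamma tD_i}Be^{2\pi i\gamma tD_i}$ to $\langle e^{-2\pi i\gamma t(k_{i,m}-k_{i,n})}\rangle B_{mn}$, which is $B_{mn}$ for $m=n$ and, again by independence, $|\varphi_{\mathbb{P}}(2\pi\gamma t)|^{2}B_{mn}$ for $m\neq n$; conjugating back gives $|\varphi_{\mathbb{P}}(2\pi\gamma t)|^{2}A_{ii}+\big(1-|\varphi_{\mathbb{P}}(2\pi\gamma t)|^{2}\big)\,\mathcal{D}_i(A_{ii})$, where $\mathcal{D}_i(X):=\sum_m\ketbra{w_{i,m}}{w_{i,m}}X\ketbra{w_{i,m}}{w_{i,m}}$ with $\ket{w_{i,m}}=W_i\ket{m}$. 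It then remains to Haar-average the random pinching $\mathcal{D}_i$ over $W_i$. A left-invariance change of variables shows $\langle\mathcal{D}_i\rangle$ is covariant under conjugation by $\mathrm{U}(V_i)$; since $\BL(V_i)$ decomposes under conjugation as $\mathbb{C}\identity_{V_i}$ plus the (irreducible) traceless part, Schur's lemma forces $\langle\mathcal{D}_i\rangle(X)=\alpha X+\beta\,\Tr(X)\identity_{V_i}$. Evaluating at $X=\identity_{V_i}$ (or using trace preservation) gives $\alpha+\mu_i\beta=1$, and a short second-moment computation, $\langle\psi|\langle\mathcal{D}_i\rangle(\ketbra{\psi}{\psi})|\psi\rangle=\mathbb{E}\sum_m|\langle\psi|w_{i,m}\rangle|^{4}=2/(\mu_i+1)$ for a unit vector $\ket{\psi}$, yields $\alpha=\beta=1/(\mu_i+1)$. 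Hence the $i$-th diagonal block of the average is $|\varphi_{\mathbb{P}}(2\pi\gamma t)|^{2}A_{ii}+\big(1-|\varphi_{\mathbb{P}}(2\pi\gamma t)|^{2}\big)\frac{1}{\mu_i+1}\big(A_{ii}+\Tr(A_{ii})\identity_{V_i}\big)$.

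Finally I would assemble the pieces. Because $\Comm\{U^{(j)}\}$ is the block-diagonal algebra $\bigoplus_i\BL(V_i)$ and $\Bicomm\{U^{(j)}\}=\bigoplus_i\mathbb{C}\identity_{V_i}$, the Hilbert--Schmidt projectors are $\mathcal{P}_{\{U^{(j)}\}}^{C}(A)=\bigoplus_i A_{ii}$ and $\mathcal{P}_{\{U^{(j)}\}}^{B}(A)=\bigoplus_i\frac{\Tr(A_{ii})}{\mu_i}\identity_{V_i}$, so with $\Upsilon=\bigoplus_i\frac{1}{\mu_i+1}\identity_{V_i}$ one checks directly that $\Upsilon\,\mathcal{P}_{\{U^{(j)}\}}^{C}(A)+(\identity-\Upsilon)\,\mathcal{P}_{\{U^{(j)}\}}^{B}(A)=\bigoplus_i\frac{1}{\mu_i+1}\big(A_{ii}+\Tr(A_{ii})\identity_{V_i}\big)$, which is exactly the non-unitary part found above, while $|\varphi_{\mathbb{P}}(2\pi\gamma t)|^{2}e^{-it\hat H}Ae^{it\hat H}$ supplies both the off-diagonal blocks and the $|\varphi_{\mathbb{P}}(2\pi\gamma t)|^{2}A_{ii}$ terms (using $e^{-it\hat h_i}A_{ii}e^{it\hat h_i}=A_{ii}$), and the $t=0$ case follows from $\varphi_{\mathbb{P}}(0)=1$. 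The main obstacle is the third step — carrying out the Haar average of the random pinching and, in particular, getting the constant $1/(\mu_i+1)$ (hence the operator $\Upsilon$) right; the rest is bookkeeping with the block-matrix and commutant structure.
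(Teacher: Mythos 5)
Your proposal is correct, and its overall architecture matches the paper's: parametrize the admissible Hamiltonians via Lemma \ref{lem:admissibleHamiltonians} as $\hat H + 2\pi\gamma\,(\text{integer-spectrum block operator})$, average jointly over the i.i.d.\ integer eigenvalues and a Haar-random eigenbasis in each block $V_i$, and then recognize the block-diagonal remainder as $\Upsilon\mathcal{P}^{C}+(\identity-\Upsilon)\mathcal{P}^{B}$. Where you genuinely diverge is in how the Haar average is executed. The paper performs the unitary twirl first, invoking the appendix machinery: the Schur--Weyl-type duality (Lemma \ref{lem:SchurWeylAnalog}) to identify the commutant of $\{\Ad_W\}$ inside $\BBH$ as $\mathrm{span}\{\id,Q\}$, the resulting closed formula for $\overline{\Ad_{B}}$ in terms of $|\Tr(B)|^{2}$ and $\|B\|_{\mathrm{HS}}^{2}$ (Lemmas \ref{lem:adjointAverage}--\ref{lem:subspaceAdjointAverage}), and only then sums over $\vec k$, so that the constant $1/(\mu_i+1)$ emerges from the i.i.d.\ evaluation of $\sum_{\vec k}\mathbb{P}(\vec k)\big(|\Tr(e^{i2\pi t\gamma\diag(\vec k_{(i)})})|^{2}-1\big)/(\mu_i^{2}-1)$. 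You instead condition on the eigenbasis, average over the eigenvalues first (reducing each diagonal block to a convex combination of the identity channel and a pinching in the random basis), and then twirl the pinching, obtaining $\alpha=\beta=1/(\mu_i+1)$ from Schur's lemma plus the fourth-moment identity $\mathbb{E}\sum_m|\langle\psi|w_{i,m}\rangle|^{4}=2/(\mu_i+1)$. The two routes are equivalent in content (both are unitary-design/twirling arguments), but yours is somewhat more elementary and self-contained, bypassing the explicit $\overline{\Ad_B}$ formula, while the paper's appendix lemmas are more general-purpose (they give the twirl of an arbitrary $\Ad_B$, reusable beyond this theorem). Minor points to make explicit if you write this up: the Haar eigenbasis average is an additional modeling choice (maximal ignorance within each block) not stated in the theorem's hypotheses, exactly as in the paper's proof, and the $\mu_i=1$ blocks should be handled as a trivial special case since the traceless part is then empty; both are consistent with your formulas.
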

\begin{proof}
	Fix some orthonormal basis $\{|j\rangle\}$ for $\HS$ subordinate to the decomposition $\HS=\bigoplus_{i=1}^{\kappa}V_{i}$ (i.e., where each $|j\rangle$ lies in one of the subspaces $V_{i}$) and define for any $\vec{d}\in\mathbb{C}^{\dHs\dHe}$, $\diag(\vec{d}) = \sum_{j}d_{j}\ketbra{j}{j}$.  Observe that, using Lemma \ref{lem:admissibleHamiltonians}, $U(t)$ may be described by the set
	\begin{equation}
		\left\{e^{-it\left(\hat{H}+2\pi \gamma R^{\dag}\diag(\vec{k})R\right)}\;:\;\vec{k}\in\mathbb{Z}^{\dHs\dHe}, R= \bigoplus_{i=1}^{\kappa} R_{i}\text{ and } R_{i}\in\mathrm{U}(V_{i})\right\},
	\end{equation}
	where $\hat{H}$ is an admissible Hamiltonian in the bicommutant of $\{U^{(j)}\}$, and where $\gamma := \frac{\LCM\{q_{j}\}}{\tau_{1}}$.  
	Now, we introduce a probability measure as a summable, normalized function $\mathbb{P}(\vec{k})$ on the module $\mathbb{Z}^{\dHs\dHe}$ of integer vectors $\vec{k}$ parametrizing the Hamiltonian eigenvalues.  Then using Lemma \ref{lem:subspaceAdjointAverage} (see the Appendix), we can average over the set of possibilities for $U(t)$, leading to
	\begin{subequations}
	\begin{align}
		\langle U(t)AU^{\dag}(t)\rangle & = \sum_{\vec{k}\in\mathbb{Z}^{\dH}}\mathbb{P}(\vec{k})\int_{R\in\bigoplus \mathrm{U}(V_{i})}\hspace{-25pt}\rmd \eta(R) \quad e^{-it(\hat{H}+2\pi\gamma R^{\dag}\diag(\vec{k})R )}A e^{it(\hat{H}+2\pi\gamma R^{\dag}\diag(\vec{k})R)}\\
		& = \sum_{\vec{k}\in\mathbb{Z}^{\dH}}\mathbb{P}(\vec{k}) \Bigg[\left(\bigoplus_{i=1}^{\kappa} \frac{\big|\Tr\big(e^{i2\pi t\gamma \diag(\vec{k}_{(i)})}\big)\big|^{2} - 1}{\mu_{i}^{2}-1}A_{(ii)} + \frac{\mu_{i}^{2}- \big|\Tr\big(e^{i2\pi t\gamma \diag(\vec{k}_{(i)})}\big)\big|^{2}}{\mu_{i}^{2}-1}\frac{\Tr(A_{(ii)})}{\mu_{i}}\identity_{V_{i}}\right)\nonumber\\
		& \quad \oplus \bigoplus_{i\neq j=1}^{\kappa}\frac{e^{-it(\hat{h}_{i} - \hat{h}_{j})}\Tr\big(e^{-i2\pi t\gamma \diag(\vec{k}_{(i)})}\big)\Tr\big(e^{i2\pi t\gamma \diag(\vec{k}_{(j)})}\big)}{\mu_{i}\mu_{j}}A_{(ij)}\Bigg]\\
		& = \left[\bigoplus_{i=1}^{\kappa} \alpha_{i}^{P}(t) A_{(ii)} + (1-\alpha_{i}^{P}(t))\frac{\Tr(A_{(ii)})}{\mu_{i}}\identity_{V_{i}}\right] \oplus \bigoplus_{i\neq j=1}^{\kappa}\beta_{ij}(t)A_{(ij)}
	\end{align}
	\end{subequations}
	where
	\begin{subequations}
	\begin{align}
		\alpha_{i}^{P}(t) & = \sum_{\vec{k}\in\mathbb{Z}^{\dH}}\mathbb{P}(\vec{k})\frac{\big|\Tr\big(e^{i2\pi t\gamma \diag(\vec{k}^{(i)})}\big)\big|^{2} - 1}{\mu_{i}^{2}-1}\\
		\beta_{ij}(t) & = \sum_{\vec{k}\in\mathbb{Z}^{\dH}}\mathbb{P}(\vec{k})\frac{e^{-it(\hat{h}_{i} - \hat{h}_{j})}\Tr\big(e^{-i2\pi t\gamma \diag(\vec{k}^{(i)})}\big)\Tr\big(e^{i2\pi t\gamma \diag(\vec{k}^{(j)})}\big)}{\mu_{i}\mu_{j}}.
	\end{align}
	\end{subequations}
	If we take the vector components of $\vec{k}$ to be i.i.d. random variables, i.e., $\mathbb{P}(\vec{k}) = \prod_{i}\mathbb{P}(k_{i})$, then 
	\begin{subequations}
	\begin{align}
		\alpha_{i}^{P}(t) & = \sum_{\vec{k}\in\mathbb{Z}^{\mu_{i}}}\prod_{j=1}^{\mu_{i}}\mathbb{P}(k_{j})\frac{\left|\sum_{j}e^{i2\pi t\gamma k_{j}}\right|^{2} - 1}{\mu_{i}^{2}-1} \\
		& = \sum_{\vec{k}\in\mathbb{Z}^{\mu_{i}}}\prod_{j=1}^{\mu_{i}}\mathbb{P}(k_{j})\frac{\mu_{i}-1 + \sum_{j\neq l}e^{i2\pi t\gamma (k_{j}-k_{l})}}{\mu_{i}^{2}-1}\\
		& = \frac{\mu_{i}-1 + \mu_{i}(\mu_{i}-1)|\varphi_{\mathbb{P}}(2\pi\gamma t)|^{2}}{\mu_{i}^{2}-1} = \frac{\mu_{i}|\varphi_{\mathbb{P}}(2\pi\gamma t)|^{2} + 1}{\mu_{i}+1}\\
		& = |\varphi_{\mathbb{P}}(2\pi\gamma t)|^{2} + \frac{1-|\varphi_{\mathbb{P}}(2\pi\gamma t)|^{2}}{\mu_{i}+1}
	\end{align}
	\end{subequations}
	and
	\begin{subequations}
	\begin{align}
		\beta_{ij}(t) & = e^{-it(\hat{h}_{i} - \hat{h}_{j})}\sum_{\jmdstack{\vec{k}^{(i)}\in\mathbb{Z}^{\mu_{i}}}{\vec{k}^{(j)}\in\mathbb{Z}^{\mu_{j}}}}\prod_{\jmdstack{q=1,\dots,\mu_{i}}{r=1,\dots,\mu_{j}}}\mathbb{P}(k_{q}^{(i)})\mathbb{P}(k_{r}^{(j)})\frac{\left(\sum_{q=1}^{\mu_{i}}e^{-i2\pi t\gamma k_{q}^{(i)}}\right)\left(\sum_{r=1}^{\mu_{j}}e^{i2\pi t\gamma k_{r}^{(j)}}\right)}{\mu_{i}\mu_{j}}\\
		& = \frac{e^{-it(\hat{h}_{i} - \hat{h}_{j})}}{\mu_{i}\mu_{j}}\sum_{q=1}^{\mu_{i}}\sum_{r=1}^{\mu_{j}}\left(\sum_{k_{q}^{(i)}\in\mathbb{Z}}\mathbb{P}(k_{q}^{(i)})e^{-i2\pi t\gamma k_{q}^{(i)}}\right)\left(\sum_{k_{r}^{(j)}\in\mathbb{Z}}\mathbb{P}(k_{r}^{(j)})e^{i2\pi t\gamma k_{r}^{(j)}}\right)\\
		& = e^{-it(\hat{h}_{i} - \hat{h}_{j})}|\varphi_{\mathbb{P}}(2\pi\gamma t)|^{2}.
	\end{align}
	\end{subequations}
	where the $2\pi$-periodic function
	\begin{equation}
		\varphi_{\mathbb{P}}(t) = \mathbb{E}[e^{itk}] = \sum_{k\in\mathbb{Z}}\mathbb{P}(k)e^{i  t k}
	\end{equation}
	is the \emph{characteristic function} of the distribution $\mathbb{P}$.  Then 
	\begin{subequations}
	\begin{align}
		\langle U(t)AU^{\dag}(t)\rangle & = \left[\bigoplus_{i=1}^{\kappa} \alpha_{i}^{P}(t) A_{(ii)} + (1-\alpha_{i}^{P}(t))\frac{\Tr(A_{(ii)})}{\mu_{i}}\identity_{V_{i}}\right] \oplus \bigoplus_{i\neq j=1}^{\kappa}\beta_{ij}(t)A_{(ij)}\\
		& = |\varphi_{\mathbb{P}}(2\pi\gamma t)|^{2}\left[\bigoplus_{i=1}^{\kappa} A_{(ii)} \oplus \bigoplus_{i\neq j=1}^{\kappa}e^{-it(\hat{h}_{i} - \hat{h}_{j})}A_{(ij)}\right]\nonumber\\
		& \qquad  + (1-|\varphi_{\mathbb{P}}(2\pi\gamma t)|^{2})\left[\bigoplus_{i=1}^{\kappa} \frac{1}{\mu_{i}+1} A_{(ii)} + \left(1-\frac{1}{\mu_{i}+1}\right)\frac{\Tr(A_{(ii)})}{\mu_{i}}\identity_{V_{i}}\right]\\
		& = |\varphi_{\mathbb{P}}(2\pi\gamma t)|^{2}e^{-it\hat{H}}Ae^{it\hat{H}} + \big(1-|\varphi_{\mathbb{P}}(2\pi\gamma t)|^{2}\big)\Big[\Upsilon \mathcal{P}_{\{U^{(j)}\}}^{C}(A) + (\identity - \Upsilon)\mathcal{P}_{\{U^{(j)}\}}^{B}(A)\Big]
	\end{align}
	\end{subequations}
\end{proof}

\begin{example}[One qubit, $\kappa = 2$]
	Consider a closed-system problem where we have measured the projective unitary propagators of a single qubit at rationally related times $0<\tau_{1}<\dots<\tau_{M}$.  We will assume that the shared eigenspaces are 1-dimensional.  Then, letting $\hat{Z} = \identity_{V_{1}}\oplus -\identity_{V_{2}}$ (in a basis that simultaneously diagonalizes the unitaries, $\hat{Z} = \pm \sigma_{Z}$, the Pauli $Z$ matrix), we can write $\hat{H} = a \hat{Z} + b\identity$ for some $a,b\in\mathbb{R}$, and for any $A\in\BH$, 
	\begin{equation}
		\mathcal{P}_{\{U^{(j)}\}}^{C}(A) = \mathcal{P}_{\{U^{(j)}\}}^{B}(A) = \frac{1}{2}( \identity A\identity + \hat{Z} A \hat{Z}),
	\end{equation}
	so that the CP map $\Psi_{t}^{\{U^{(j)}\}}$ is a periodic interpolation between unitary evolution and a dephasing channel.  I.e.,
	\begin{subequations}
	\begin{align}
		\Psi_{t}^{\{U^{(j)}\}}(A) & = |\varphi_{\mathbb{P}}(2\pi\gamma t)|^{2}e^{-iat\hat{Z}}Ae^{iat\hat{Z}} + \frac{1}{2}(1-|\varphi_{\mathbb{P}}(2\pi\gamma t)|^{2})(\identity A\identity + \hat{Z} A \hat{Z})\\
		& =|\varphi_{\mathbb{P}}(2\pi\gamma t)|^{2}(\cos(at)\identity - i\sin(at)\hat{Z})A(\cos(at)\identity + i\sin(at)\hat{Z}) + \frac{1}{2}(1-|\varphi_{\mathbb{P}}(2\pi\gamma t)|^{2})(\identity A\identity + \hat{Z} A \hat{Z})\\
		& = |\varphi_{\mathbb{P}}(2\pi\gamma t)|^{2}(\cos^{2}(at)A - i\sin(at)\cos(at)[\hat{Z},A] + \sin^{2}(at)\hat{Z}A\hat{Z}) + \frac{1}{2}(1-|\varphi_{\mathbb{P}}(2\pi\gamma t)|^{2})(A + \hat{Z} A \hat{Z})\\
		& = \frac{1}{2}(A + \hat{Z} A \hat{Z}) + \frac{1}{2}|\varphi_{\mathbb{P}}(2\pi\gamma t)|^{2}\big(\cos(2at)(A - \hat{Z}A\hat{Z}) - i\sin(2at)[\hat{Z},A]\big)\\
		& = \frac{1}{2}(A + \hat{Z} A \hat{Z}) + \frac{1}{2}|\varphi_{\mathbb{P}}(2\pi\gamma t)|^{2}e^{-2iatZ}(A-\hat{Z}A\hat{Z}),
	\end{align}
	\end{subequations}
	so that the diagonal elements of $A$ are kept fixed, while the off-diagonal elements oscillate in magnitude according to the characteristic function $\varphi_{\mathbb{P}}$ and oscillate in phase according to the value of $a$.
\end{example}

\begin{example}[One qubit, $\kappa = 1$]
	Consider a closed-system problem where we have measured the projective unitary propagators of a single qubit at rationally related times $0<\tau_{1}<\dots<\tau_{M}$.  We will assume that there is one shared 2-dimensional eigenspace (i.e., all unitaries are just phases times identity; the measured projective unitary operators are identity for all $0<\tau_{1}<\dots<\tau_{M}$).  Then we can write $\hat{H} = a\identity$ for some $a\in\mathbb{R}$, and for any $A\in\BH$, 
	\begin{equation}
		\Upsilon \mathcal{P}_{\{U^{(j)}\}}^{C}(A) + (\identity - \Upsilon)\mathcal{P}_{\{U^{(j)}\}}^{B}(A) = \frac{1}{3}( A + \Tr(A)\identity),
	\end{equation}
	so that 
	\begin{subequations}
	\begin{align}
		\Psi_{t}^{\{U^{(j)}\}}(A) & = |\varphi_{\mathbb{P}}(2\pi\gamma t)|^{2}A + \frac{1}{3}(1-|\varphi_{\mathbb{P}}(2\pi\gamma t)|^{2})(A + \Tr(A)\identity)\\
		& = \frac{1}{3}\big(1+2|\varphi_{\mathbb{P}}(2\pi\gamma t)|^{2}\big)A + \frac{2\Tr(A)}{3}\big(1-|\varphi_{\mathbb{P}}(2\pi\gamma t)|^{2}\big)\frac{\identity}{2}
	\end{align}
	\end{subequations}
	so that this one-parameter family of CP maps describes a periodic depolarization of $A$.  Unless $\mathbb{P}(k)$ is the Kronecker delta at $k=0$, this is non-Markovian evolution.
\end{example}

\subsection{The ``Problem'' With Irrational Times}

It might be suggested that the problem of rationally related times (namely that the Hamiltonian is not uniquely defined by the unitary operators) could be resolved by having at least two time points with irrational ratio, for which there is a unique admissible Hamiltonian.  One problem with this is that any uncertainty in the time points still allows for infinitely many Hamiltonians.  Another is that, even with perfect knowledge of the time points, there exist infinitely many Hamiltonians, far from the unique perfect solution, that produce unitary operators arbitrarily close to the exact operators.

\begin{theorem}
	Fix some $M > 1$, let $0<\tau_{1}<\tau_{2}<\dots<\tau_{M}$ be a set of time points not rationally related.  Let $\{U^{(i)}\}$ be a consistent set of unitary operators representing the propagators at these $M$ time points, with unique admissible Hamiltonian $\hat{H}$.  Then for any arbitrarily small $\epsilon>0$ and any arbitrarily large $\beta>0$, there exist infinitely many Hamiltonians $H$ such that $\|H - \hat{H}\| > \beta$ and $\|e^{-i\tau_{j}H} - U^{(j)}\|<\epsilon$ for all $j=1,\dots, M$.  So, while in this irrational case the admissible Hamiltonian is uniquely defined, the inverse problem of identifying that Hamiltonian from the tomographic data is highly non-robust without additional constraints.
\end{theorem}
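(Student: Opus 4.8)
The plan is to exhibit the required family explicitly, thereby reducing the whole statement to a standard recurrence fact (equivalently, simultaneous Diophantine approximation) for linear flows on a torus. Write $\HS=\bigoplus_{i=1}^{\kappa}V_{i}$ for the shared-eigenspace decomposition of $\{U^{(j)}\}$ and $\hat{H}=\bigoplus_{i}\hat{h}_{i}\identity_{V_{i}}$. Pick any index $i$ and any unit vector $v\in V_{i}$, and let $P=\ketbra{v}{v}$ be the rank-one orthogonal projector onto $\mathbb{C}v$; since $\hat{H}$ and every $U^{(j)}$ act as scalars on $V_{i}$, $P$ commutes with $\hat{H}$ and with all $U^{(j)}$. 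For real $c$ set $H_{c}:=\hat{H}+cP$. Then $H_{c}$ is Hermitian, $\|H_{c}-\hat{H}\|=|c|$ in both the operator and Hilbert--Schmidt norms (as $P$ has rank one), and, using $[\hat{H},P]=0$, $U^{(j)}=e^{-i\tau_{j}\hat{H}}$ and $P^{2}=P$,
\begin{equation}
e^{-i\tau_{j}H_{c}}-U^{(j)}=e^{-i\tau_{j}\hat{H}}\bigl(e^{-ic\tau_{j}P}-\identity\bigr)=\bigl(e^{-ic\tau_{j}}-1\bigr)\,e^{-i\tau_{j}\hat{H}}P ,
\end{equation}
so that $\|e^{-i\tau_{j}H_{c}}-U^{(j)}\|=|e^{-ic\tau_{j}}-1|\le\mathrm{dist}(c\tau_{j},2\pi\mathbb{Z})$ for each $j$.

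By the previous paragraph it suffices to produce, for every $\epsilon>0$ and $\beta>0$, infinitely many reals $c$ with $|c|>\beta$ and $\mathrm{dist}(c\tau_{j},2\pi\mathbb{Z})<\epsilon$ for all $j=1,\dots,M$: the corresponding operators $H_{c}$ are then Hermitian, pairwise distinct (distinct $c$ give distinct $cP$), all different from $\hat{H}$, and satisfy both required inequalities. Writing $c=2\pi t$, this is precisely the assertion that the orbit of the origin under the linear flow $t\mapsto(\tau_{1}t,\dots,\tau_{M}t)\bmod\mathbb{Z}^{M}$ on the torus $\mathbb{T}^{M}$ returns to every neighbourhood of the origin at an unbounded set of times $t$.

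This last fact is standard. One self-contained route: for a large integer $K$, cut $\mathbb{T}^{M}$ into $m^{M}$ congruent half-open subcubes with $m=\lfloor K^{1/M}\rfloor$, so that $m^{M}\le K$; by pigeonhole two of the $K+1$ points $(\tau_{1}\ell,\dots,\tau_{M}\ell)\bmod\mathbb{Z}^{M}$, $\ell=1,\dots,K+1$, share a subcube, and their difference produces an integer $t_{0}\in\{1,\dots,K\}$ with $\mathrm{dist}(\tau_{j}t_{0},\mathbb{Z})<1/m$ for every $j$. Taking positive integer multiples $nt_{0}$ and applying the triangle inequality on $\mathbb{T}^{M}$ gives $\mathrm{dist}(\tau_{j}nt_{0},\mathbb{Z})<n/m$; one then checks that, by first choosing $K$ large enough and then taking $n$ in a suitable range that grows with $K$, the reals $c=2\pi n t_{0}$ can be made to satisfy $|c|>\beta$ and $\mathrm{dist}(c\tau_{j},2\pi\mathbb{Z})<\epsilon$ simultaneously, and that letting $K\to\infty$ yields infinitely many of them. (Alternatively one may invoke Poincar\'e recurrence for the Haar-measure--preserving flow together with its equicontinuity, or Kronecker's theorem on orbit closures of linear torus flows.) Feeding these $c$ into the construction of the first paragraph completes the proof. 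Note that the non-rationality hypothesis enters only through Lemma \ref{lem:admissibleHamiltonians}, to guarantee that $\hat{H}$ really is \emph{the unique} admissible Hamiltonian; the construction of the approximants is otherwise indifferent to the arithmetic of the $\tau_{j}$.

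The only genuine obstacle is the recurrence step: one needs not merely that the torus orbit comes close to the origin once, but that it does so at \emph{arbitrarily large} times while the error \emph{simultaneously} tends to zero, i.e. the recurrence time and the recurrence accuracy must be controlled together. The integer-multiple argument above (or Poincar\'e recurrence with equicontinuity) is exactly what promotes a single near-return into the required unbounded family; the rest is routine manipulation of commuting operators.
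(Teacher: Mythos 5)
Your proof is correct, and it shares the paper's basic strategy---perturb $\hat{H}$ by something commuting with all the $U^{(j)}$ and use simultaneous Diophantine approximation to keep the exponentials at the measurement times nearly unchanged while the perturbation itself grows---but the execution differs in ways worth noting. The paper takes $H=\hat{H}+\tfrac{2\pi r}{\tau_{1}}K$ with $K$ an integer-spectrum element of the commutant (the same family as in Lemma~\ref{lem:admissibleHamiltonians}), so $U^{(1)}$ is reproduced \emph{exactly} and only the ratios $\tau_{j}/\tau_{1}$ need approximating; it then invokes Dirichlet's theorem, whose denominators $r_{k}\to\infty$ make $\|H-\hat{H}\|$ large automatically. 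You instead use a rank-one perturbation $cP$ with a continuous real parameter, approximate $c\tau_{j}$ modulo $2\pi$ for \emph{all} $j$ (including $j=1$), prove the approximation step yourself by pigeonhole, and enlarge $|c|$ by passing to integer multiples $nt_{0}$ of the near-return time. Your ``one then checks'' step does go through: for $m>(\beta+2\pi)/\epsilon$ the interval $\big(\beta/2\pi,\;\epsilon m/2\pi\big)$ contains an integer $n$, giving $|c|=2\pi n t_{0}>\beta$ and errors at most $2\pi n/m<\epsilon$, and letting $K\to\infty$ makes the admissible $c$ unbounded, hence infinitely many distinct $H_{c}$. The trade-off: the paper's route is shorter because Dirichlet's theorem delivers accuracy and unboundedness of the scale in one stroke, while yours is more elementary and self-contained and makes explicit that irrationality of the $\tau_{j}$ matters only for the uniqueness of $\hat{H}$, not for the construction of the far-away approximants.
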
 
\begin{proof}
	Fix an integer $m>0$ and consider a Hamiltonian of the form $H = \hat{H} + \frac{2\pi r}{\tau_{1}}K$, where $r\in\mathbb{Z}$, $0\neq K = \bigoplus_{i=1}^{\kappa} K_{i}$, and each $K_{i}\in\BL(V_{i})$ is Hermitian with integer eigenvalues in $\{-m,-m+1,\dots,m-1,m\}$.  Since $\hat{H}\in\Bicomm\{U^{(j)}\}$ and $H-\hat{H}\in\Comm\{U^{(j)}\}$, it holds that $[H-\hat{H},\hat{H}] = 0$, so that
	\begin{equation}
		e^{-i\tau_{j}H}= e^{-i\tau_{j}\hat{H}}e^{-i\tau_{j}(H-\hat{H})} = U^{(j)}\bigoplus_{i=1}^{\kappa}e^{-i2\pi r\frac{\tau_{j}}{\tau_{1}}K_{i}}
	\end{equation}
	By Dirichlet's theorem on diophantine approximation \cite{Davenport1970}, there exists an infinite sequence of positive integers $r_{k}$ and accompanying integers $\{y_{j,k}\}$ such that $\max_{1\leq j\leq M}\{|r_{k}\tau_{j}/\tau_{1} - y_{j,k}|\}\leq r_{k}^{-1/M}$.  Therefore, for any $\epsilon>0$ there exists an infinite increasing sequence of positive integers $r_{k}$ and accompanying integers $\{y_{j,k}\}$ such that $|r_{k}\tau_{j}/\tau_{1} - y_{j,k}|<\frac{\epsilon}{\dH m}$ for all $j,k$.  For any $r_{k}$ in this sequence, the spectrum of $e^{-i2\pi r_{k}\frac{\tau_{j}}{\tau_{1}}K}$ is contained in the arc $\{e^{i\theta}\;:\; \theta\in(-\epsilon/\dH,\epsilon/\dH)\}$.  Therefore for any $j=1,\dots, M$, the Hilbert-Schmidt distance between $e^{-i\tau_{j}H}$ and $U^{(j)}$ is bounded as
	\begin{equation}
		\|e^{-i\tau_{j}H} - U^{(j)}\|^{2} = \|e^{-i2\pi r_{k}\frac{\tau_{j}}{\tau_{1}}K} - \identity\|^{2}\leq\epsilon^{2}
	\end{equation}
	while $\|H - \hat{H}\| = 2\pi r_{k}\frac{\tau_{j}}{\tau_{1}}\|K\| > 2\pi r_{k}\frac{\tau_{j}}{\tau_{1}} \to \infty$ as $k\to \infty$.  So for any $\beta>0$ there will exist infinitely many $r_{k}$ such that $\|H-\hat{H}\|>\beta$. 
\end{proof}

%
%

\section{Predicting Open Quantum Dynamics}
\label{sec:open}

In order to extend the preceding analysis to open systems, we propose to first describe the set of all system-bath consistent unitary $M$-tuples that are in agreement with (i.e., would reproduce) the tomographic subsystem dynamical maps $\{\Phi_{\tau_{i}}\}$ measured at rationally-related times $0<\tau_{1}<\dots<\tau_{M}$.  Each such $M$-tuple gives rise to a maximum ignorance system-bath evolution operator $\Psi_{t}^{\{U^{(j)}\}}$.  Averaging over all of these system-bath evolution operators yields a maximum ignorance evolution prediction for the system-bath.  Application of the partial trace will yield a one-parameter family of CP maps representing the maximum ignorance prediction for the system alone.


The program outlined above begins with the identification of all system-bath unitary operators that agree with a measured CP evolution map for the system at a fixed measurement time $\tau$.  To that end, consider the evolution of a system coupled to an environment defined by the map \begin{equation}
	\rho_{\textsc{s}}(\tau) = \Trenv\big[U(\tau)(\rho_{\textsc{s}}(0)\otimes\ketbra{0}{0})U^{\dag}(\tau)\big]
\end{equation} 
for some unitary operator $U(\tau)\in\UH$, where $\HS = \Hse$.  Call this CP map $\Phi_{\tau}:\mathcal{D}_{s}\to\mathcal{D}_{s}$ on the space $\mathcal{D}_{s}$ of system density matrices.

\begin{question}
	Given the CP map $\Phi_{\tau}$ for some fixed $\tau>0$, what are the possible unitary operators $U(\tau)$ that could give rise to it?
\end{question}

\begin{conjecture}
	\label{conj:equivUnitaries}
	Let $\mathcal{A}\subset \BH$ be a unital, self-adjoint subalgebra of the associative algebra $\BH$ of bounded linear operators on $\HS = \Hse$.  Then $U,W\in\UH$ satisfy $\Trenv(UAU^{\dag}) = \Trenv(WAW^{\dag})$ for all $A\in\mathcal{A}$ (i.e., $U$ and $W$ are $\mathcal{A}$-equivalent, $U\sim_{\mathcal{A}} W$) if and only if $W = (\identity\otimes V)U Q$ for some $V\in\Ue$ and $Q\in\UCent(\mathcal{A})$, where $\UCent(\mathcal{A})$ is the ``unitary centralizer'' of $\mathcal{A}$, i.e., the set of all $R\in\UH$ such that $RAR^{\dag} = A$ for all $A\in\mathcal{A}$.
\end{conjecture}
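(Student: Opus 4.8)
The plan is to handle the ``if'' implication by a one‑line computation and to attack the ``only if'' implication through the uniqueness theory for Stinespring dilations that share a fixed representation of $\mathcal{A}$.

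The ``if'' direction is immediate: if $W=(\identity\otimes V)UQ$ with $V\in\Ue$ and $Q\in\UCent(\mathcal{A})$, then $QAQ^{\dag}=A$ for all $A\in\mathcal{A}$ by the definition of the unitary centralizer, so $WAW^{\dag}=(\identity\otimes V)(UAU^{\dag})(\identity\otimes V^{\dag})$, and the partial trace over $\He$ is invariant under conjugation by operators of the form $\identity\otimes V$; hence $\Trenv(WAW^{\dag})=\Trenv(UAU^{\dag})$ and $U\sim_{\mathcal{A}}W$.

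For the ``only if'' direction let $\Lambda:\mathcal{A}\to\BL(\Hs)$ denote the common completely positive map $A\mapsto\Trenv(UAU^{\dag})=\Trenv(WAW^{\dag})$. Fixing an orthonormal basis $\{\ket{e_{k}}\}_{k=1}^{\dHe}$ of $\He$ and writing $K^{U}_{k}:=U^{\dag}(\identity_{\Hs}\otimes\ket{e_{k}})$, one has $\Lambda(A)=\sum_{k}(K^{U}_{k})^{\dag}\,A\,K^{U}_{k}$ and, identically, $\Lambda(A)=\sum_{k}(K^{W}_{k})^{\dag}\,A\,K^{W}_{k}$ with $K^{W}_{k}:=W^{\dag}(\identity_{\Hs}\otimes\ket{e_{k}})$. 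These are two Stinespring dilations of $\Lambda$ that use the \emph{same} (defining) representation $\pi(A)=A$ of $\mathcal{A}$ on $\HS$; moreover each $K^{U}_{k}$ is an isometry $\Hs\to\HS$ and the ranges $K^{U}_{k}\Hs$ are mutually orthogonal and together exhaust $\HS$ (likewise for $W$), this being precisely the encoding of the splitting $\HS=\Hse$ through $U^{\dag}$. Packaging the Kraus families into $V_{U}:=\sum_{k}(K^{U}_{k}\cdot)\otimes\ket{k}$ and $V_{W}:=\sum_{k}(K^{W}_{k}\cdot)\otimes\ket{k}$ (both are maps $\Hs\to\HS\otimes\mathbb{C}^{\dHe}$ dilating $\Lambda$ against $\pi\otimes\identity$), I would invoke the Radon--Nikodym/commutant‑lifting uniqueness theorem for Stinespring dilations to obtain a unitary $\mathcal{T}$ in the commutant $\big(\pi(\mathcal{A})\otimes\identity\big)'=\Comm(\mathcal{A})\otimes\BL(\mathbb{C}^{\dHe})$ with $V_{W}=\mathcal{T}V_{U}$. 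Writing $\mathcal{T}=\sum_{k,l}\mathcal{T}_{lk}\otimes\ketbra{l}{k}$ with $\mathcal{T}_{lk}\in\Comm(\mathcal{A})$ and unpacking $V_{W}=\mathcal{T}V_{U}$, this becomes
\begin{equation}
	W^{\dag}=\sum_{k,l}\mathcal{T}_{lk}\,U^{\dag}\,\big(\identity_{\Hs}\otimes\ketbra{e_{k}}{e_{l}}\big).
\end{equation}

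The crux — and the step I expect to be the genuine obstacle — is to show that the residual freedom in $\mathcal{T}$ can be arranged so that $\mathcal{T}=Q^{\dag}\otimes C$ with $Q\in\UCent(\mathcal{A})$ and $C$ a $\dHe\times\dHe$ unitary matrix: granted that, the identity above collapses to $W^{\dag}=Q^{\dag}U^{\dag}(\identity_{\Hs}\otimes V^{\dag})$ with $V:=\sum_{k,l}\overline{C_{lk}}\,\ketbra{e_{k}}{e_{l}}\in\Ue$, i.e.\ $W=(\identity\otimes V)UQ$. Forcing this product structure must use both that $\{K^{U}_{k}\}$ and $\{K^{W}_{k}\}$ arise from unitaries (so their ranges partition $\HS$ into orthogonal isometric copies of $\Hs$) and the fine structure of $\mathcal{A}$ and $\Comm(\mathcal{A})$ relative to the tensor split, via the finite‑dimensional decomposition $\HS\cong\bigoplus_{\alpha}\mathcal{H}_{\alpha}\otimes\mathcal{K}_{\alpha}$ with $\mathcal{A}\cong\bigoplus_{\alpha}\BL(\mathcal{H}_{\alpha})\otimes\identity_{\mathcal{K}_{\alpha}}$. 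Since the decomposition attached to $\mathcal{A}$ need not be subordinate to the physical split $\Hse$, I would first prove the factorization — and hence the theorem — under a compatibility hypothesis, e.g.\ that $\mathcal{A}=\mathcal{A}_{\textsc{s}}\otimes\BL(\He)$ for some unital $*$‑subalgebra $\mathcal{A}_{\textsc{s}}\subseteq\BL(\Hs)$ (which covers the main case of interest, $\mathcal{A}=\BL(\Hs)\otimes\identity_{\He}$), and only then attempt the general case. In fact one should expect the product structure of $\mathcal{T}$, and hence the stated equivalence, to fail for some subalgebras without such a hypothesis — already for two qubits and $\mathcal{A}$ the algebra generated by $\sigma_{z}\otimes\identity_{\He}$ one can exhibit $\sim_{\mathcal{A}}$‑equivalent unitaries not related as in the statement — so in parallel I would hunt for the minimal hypothesis on $\mathcal{A}$ that makes the claim correct.
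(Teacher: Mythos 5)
Your ``if'' direction is correct, and the Stinespring set-up for the ``only if'' direction is legitimate in finite dimensions: the two dilations of $\Lambda$ built from $U$ and $W$ do admit a unitary intertwiner $\mathcal{T}\in\Comm(\mathcal{A})\otimes\BL(\mathbb{C}^{\dHe})$ (though you should justify the extension from the minimal dilation subspaces to all of $\HS\otimes\mathbb{C}^{\dHe}$ --- the orthogonal complements carry equivalent subrepresentations because the minimal parts do, so the extension exists in finite dimensions --- rather than citing uniqueness loosely). But the argument stops exactly where the content of the statement lies: membership of $\mathcal{T}$ in $\Comm(\mathcal{A})\otimes\BL(\mathbb{C}^{\dHe})$ is far weaker than the conclusion, which is precisely the assertion that $\mathcal{T}$ can be chosen of product form $Q^{\dag}\otimes C$ with $Q\in\UCent(\mathcal{A})$; a generic unitary in that commutant is not of this form, and you give no mechanism for extracting the product structure from the additional fact that $U,W$ are unitaries rather than general isometric dilations. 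So, as you yourself concede, the crux is missing and this is a genuine gap. In fairness, the paper does not close it either: the statement is posed there as a \emph{conjecture}, and the only argument supplied is a remark treating the case actually needed later, $\mathcal{A}=\Bs\otimes\ketbra{0}{0}$ (plus a citation for $\mathcal{A}=\BH$). That remark is elementary and bypasses dilation machinery: $\mathcal{A}$-equivalence says $U$ and $W$ realize the same quantum operation, hence their Kraus families are related by the OSR freedom $F_{k}=\sum_{j}V_{kj}E_{j}$ with $V\in\Ue$; the Kraus operators fix the columns $\bra{i\alpha}U\ket{j0}=\bra{i}E_{\alpha}\ket{j}$, so $(\identity\otimes V^{\dag})W$ and $U$ agree on those columns, and the residual freedom in completing them to a unitary is exactly right multiplication by $\identity_{n_{s}}\oplus\mathrm{U}\big(n_{s}(n_{e}-1)\big)=\UCent(\mathcal{A})$.

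Two further points. First, your fallback hypothesis $\mathcal{A}=\mathcal{A}_{\textsc{s}}\otimes\BL(\He)$ does not contain the case the application requires: the paper's case of interest is $\mathcal{A}=\Bs\otimes\ketbra{0}{0}$ (fixed pure bath state), which is neither of that form nor equal to $\BL(\Hs)\otimes\identity_{\He}$ as your parenthetical suggests (and it is not even unital in $\BH$, so the conjecture's hypotheses already need care); if you pursue the restricted-case strategy, that is the case to prove, and the column-completion argument above is the short route. Second, your suspicion that the unrestricted statement can fail is well founded: for two qubits and $\mathcal{A}$ generated by $\identity$ and $\sigma_{z}\otimes\identity_{\He}$, the $\sim_{\mathcal{A}}$ class of a generic $U$ is cut out by only three real constraints and so has dimension $13$ inside $\mathrm{U}(4)$, while the set $\{(\identity\otimes V)UQ\}$ has dimension at most $4+8-1=11$; hence the equivalence classes are strictly larger than the claimed form, and any complete treatment must restrict $\mathcal{A}$ or weaken the conclusion.
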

\begin{remark}
	This version of the theorem (conjecture) covers the case $\mathcal{A} = \Bs\otimes \ketbra{0}{0}$ that we are interested in, the case $\mathcal{A} = \BH$ that was proved in \cite{Grace2010}, and many others.  Are the many other cases good for anything?
\end{remark}

\begin{lemma}
	$\UCent\big(\Bs\otimes \ketbra{0}{0}\big) \simeq  \identity_{\Hs\otimes|0\rangle}\oplus\mathrm{U}\big(\Hs\otimes (\He/\mathbb{C}|0\rangle)\big) \simeq \identity_{n_{s}}\oplus \mathrm{U}(n_{s}(n_{e}-1))$.
\end{lemma}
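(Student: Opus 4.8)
The plan is to exploit the fact that, although $\mathcal{A} = \Bs\otimes\ketbra{0}{0}$ is not unital inside $\BH$, it does carry its own identity element, namely the projector $P := \identity_{\Hs}\otimes\ketbra{0}{0}$ onto the subspace $\Hs\otimes\mathbb{C}|0\rangle\subset\HS$, and this projector is itself a member of $\mathcal{A}$ (take $B=\identity_{\Hs}$). Consequently, any $R\in\UCent(\Bs\otimes\ketbra{0}{0})$ must satisfy $RPR^{\dag}=P$, i.e.\ $[R,P]=0$. Writing $\He = \mathbb{C}|0\rangle\oplus\He'$ with $\He':=(\mathbb{C}|0\rangle)^{\perp}$ of dimension $n_{e}-1$, this yields the orthogonal decomposition $\HS = (\Hs\otimes\mathbb{C}|0\rangle)\oplus(\Hs\otimes\He')$, which $R$ must preserve, so that $R = R_{0}\oplus R'$ with $R_{0}\in\mathrm{U}(\Hs\otimes\mathbb{C}|0\rangle)$ and $R'\in\mathrm{U}(\Hs\otimes\He')$.

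Next I would observe that every $A = B\otimes\ketbra{0}{0}\in\mathcal{A}$ is supported entirely on the first summand (it annihilates $\Hs\otimes\He'$), so that the defining relation $RAR^{\dag}=A$ constrains only the block $R_{0}$ and imposes nothing on $R'$. Under the canonical identification $\Hs\otimes\mathbb{C}|0\rangle\cong\Hs$ (via $|s\rangle\otimes|0\rangle\mapsto|s\rangle$), that relation reads $R_{0}BR_{0}^{\dag}=B$ for all $B\in\Bs$, i.e.\ $R_{0}$ lies in the commutant of the full matrix algebra $\Bs$; since $\Bs$ acts irreducibly on $\Hs$, Schur's lemma forces $R_{0}=e^{i\theta}\identity_{\Hs}$ for some $\theta\in\mathbb{R}$. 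Conversely, any $R$ of the form $e^{i\theta}(\identity_{\Hs}\otimes\ketbra{0}{0})\oplus R'$ with $\theta\in\mathbb{R}$ and $R'\in\mathrm{U}(\Hs\otimes\He')$ fixes every $B\otimes\ketbra{0}{0}$ under conjugation, the phase cancelling and $R'$ never entering. Hence
\begin{equation}
	\UCent\big(\Bs\otimes\ketbra{0}{0}\big) = \big\{\,e^{i\theta}(\identity_{\Hs}\otimes\ketbra{0}{0})\oplus R' \;:\; \theta\in\mathbb{R},\ R'\in\mathrm{U}(\Hs\otimes\He')\,\big\}.
\end{equation}

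Finally, factoring out the global phase $e^{i\theta}\identity_{\HS}$ — which is central in $\UH$, acts trivially by conjugation, and can in any case be absorbed into the factor $\identity\otimes V$ appearing in Conjecture~\ref{conj:equivUnitaries} — this group is identified with $\identity_{\Hs\otimes|0\rangle}\oplus\mathrm{U}\big(\Hs\otimes(\He/\mathbb{C}|0\rangle)\big)$, and then, upon choosing bases, with $\identity_{n_{s}}\oplus\mathrm{U}(n_{s}(n_{e}-1))$; the real dimensions $1+\big(n_{s}(n_{e}-1)\big)^{2}$ match as a sanity check. I do not anticipate a genuine obstacle: the only point requiring care is to notice that the operators of $\mathcal{A}$ are ``blind'' to the complement $\Hs\otimes\He'$ — they vanish there — so that $R'$ is entirely unconstrained, while irreducibility of $\Bs$ pins the $|0\rangle$-sector block down to a scalar. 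A minor bookkeeping subtlety is the status of that leftover phase relative to the stated $\simeq$, which is why the identification is made only up to the (harmless) global phase.
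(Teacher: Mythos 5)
Your argument is correct, and it is worth noting that the paper gives no proof of this lemma at all: the only justification in the text is the subsequent Remark, which works in a different, operational picture --- it reconstructs the first $n_{s}$ columns of $U$ from the OSR via $\bra{i\alpha}U\ket{j0}=\bra{i}E_{\alpha}\ket{j}$ and identifies $\identity_{n_{s}}\oplus\mathrm{U}(n_{s}(n_{e}-1))$ as the group of right multiplications that leave those columns fixed. Your route --- noting that $P=\identity_{\Hs}\otimes\ketbra{0}{0}$ itself lies in $\mathcal{A}$, deducing $[R,P]=0$ and hence the block form $R=R_{0}\oplus R'$, and then using the triviality of the commutant of the full matrix algebra $\Bs$ (Schur) to force $R_{0}=e^{i\theta}\identity$ --- is a direct, basis-free computation of the conjugation centralizer, so it stands on its own and buys something the Remark does not: it makes explicit exactly which unitaries act trivially on $\mathcal{A}$. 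The one substantive point your computation surfaces is that the literal $\UCent(\mathcal{A})$ of Conjecture~\ref{conj:equivUnitaries} (all $R$ with $RAR^{\dag}=A$) is $\{e^{i\theta}P\oplus R'\}\cong\mathrm{U}(1)\times\mathrm{U}(n_{s}(n_{e}-1))$, one real dimension larger than the group stated in the lemma, because the phase on the $\Hs\otimes|0\rangle$ block cancels under conjugation; the column-fixing group of the Remark genuinely excludes that phase, since it would alter the first $n_{s}$ columns. So the lemma as written holds only up to this central circle factor, exactly as you flag, and your resolution is the right one: writing $e^{i\theta}P\oplus R'=(e^{i\theta}\identity)\,(P\oplus e^{-i\theta}R')$ shows the extra phase is a global phase, absorbable into the $\identity\otimes V$ factor of the conjecture, hence immaterial for the intended application (and invisible projectively). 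This is a defect of the lemma's phrasing rather than a gap in your proof; just state the up-to-$\mathrm{U}(1)$ qualification explicitly rather than as an afterthought, and note that your dimension count $1+(n_{s}(n_{e}-1))^{2}$ checks your description of the centralizer, not the group appearing in the lemma.
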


\begin{remark}
	That the conjecture holds for the case $\mathcal{A} = \Bs\otimes \ketbra{0}{0}$ may be seen as follows.  Fix orthonormal bases $\{|i\rangle\}\subset \Hs$ and $\{|j\rangle\}\subset \He$.  Let $E_{k} := \Trenv\big[U\big(\identity\otimes\ketbra{0}{k}\big)\big]$ be the elements of the operator sum representation (OSR).  If these matrices are known, then using the Kronecker product convention for the indexing of the basis elements of $\Hse$, the first $n_{s}$ ``columns'' of $U$ can be reconstructed as 
	\begin{equation}
		\bra{i\alpha}U\ket{j0} = \bra{i}E_{\alpha}\ket{j} 
	\end{equation}
	so that, given an OSR, the set of all unitaries $U$ yielding that exact OSR can be obtained by completing the matrix above in any way such that the result is unitary, and then multiplying on the right by any element of $\identity_{n_{s}}\oplus \mathrm{U}(n_{s}(n_{e}-1))$ to rotate the final $n_{s}(n_{e}-1)$ columns of $U$.  Finally, as pointed out in \cite{Nielsen2000} for example, there is a unitary invariance to the OSR, i.e., the OSRs $\{E_{k}\}$ and $\{F_{k}\}$ express the same quantum operation if and only if there is a $V\in\Ue$ such that $F_{k} = \sum_{j}V_{kj}E_{j}$.  In the above realization of $U$, this implies that $WU$ and $U$ express the same quantum operation if an only if $W\in\identity\otimes \Ue$.  This leaves us with the result that the only composite unitaries equivalent to $U$ are those expressed as $(\identity\otimes V) U Q$, for $V\in\Ue$ and $Q\in\identity_{n_{s}}\oplus \mathrm{U}(n_{s}(n_{e}-1))$ which is the unitary centralizer of $\mathcal{A}$.
\end{remark}

\section{Conclusions}
\label{sec:conc}

The task of reconstructing the quantum dynamics of an evolving system from the observation of finitely many experimental snapshots is technically an ill-posed problem.  Indeed, for any solution to the problem there always exist infinitely many other arbitrarily close solutions.  Mathematically, this has to do with the fact that any number can be approximated by rationals to within an arbitrarily small error.  Physically, these infinitely many solutions are a manifestation of the inevitably incomplete knowledge of the input data due to finite experimental resolution.  The problem is resolved once we postulate the existence of a prior distribution over all these admissible solutions.  We have shown, for closed quantum systems, that by averaging over all the allowed dynamical evolutions, weighted according to an unknown prior distribution, we obtain a unique non-unitary (CP) quantum evolution consistent with the observed data.  In this way our ignorance of the input data becomes explicitly encoded into the solution.  By varying the prior distribution we can interpolate from smooth solutions down to total ``pessimism'' for which the evolution is active only at the observed points. We have only touched upon the corresponding problem for open quantum systems, and plan to explore it further in a future version of this work.

\acknowledgments
This research was supported by the ARO MURI grant W911NF-11-1-0268.

\bigskip
\bibliography{EvolPrediction}

\appendix

\bigskip
\section{Schur-Weyl Duality and Analogs}

In this section we develop the group averaging results used in the proof of Theorem \ref{thm:rationalGeneralIID}.  These results may be related to the classic Schur-Weyl Duality.

\begin{lemma}[Schur-Weyl Duality] 
	Let $\mathrm{U}^{\otimes n}(\HS)=\{U^{\otimes n} = U\otimes U\otimes\cdots\otimes U\;:\; U\in\UH\}\subset \mathrm{U}(\HS^{\otimes n})$ and let $S_{n}\subset \mathrm{U}(\HS^{\otimes n})$ be the $n!$ element symmetric group represented as permutations of the $n$ subsystems, i.e., $\pi\in S_{n}$ acts as $\pi(|\psi_{1}\rangle\otimes\cdots \otimes|\psi_{n}\rangle) = |\psi_{\pi^{-1}(1)}\rangle\otimes\cdots\otimes|\psi_{\pi^{-1}(n)}\rangle$.  Then the group algebras $\mathbb{C}\mathrm{U}^{\otimes n}(\HS)$ and $\mathbb{C}S_{n}$ are each the centralizer (i.e., commutant) of the other within $\BL(\HS^{\otimes n})$ \cite{Weyl1939b, Haase1984, Doty2004, Procesi2006}.
\end{lemma}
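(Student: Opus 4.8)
The plan is to establish both commutant relations simultaneously by combining a polarization identity with the finite-dimensional double commutant (bicommutant) theorem. Write $\mathcal{A} := \Span\{U^{\otimes n} : U\in\UH\}$ (the image of the group algebra $\mathbb{C}\mathrm{U}^{\otimes n}(\HS)$ in $\BL(\HS^{\otimes n})$) and $\mathcal{B} := \mathbb{C}S_n$, realized as the span of the permutation operators $P_{\pi}$. First I would record that both are unital self-adjoint subalgebras of $\BL(\HS^{\otimes n})$: $\mathcal{A}$ is closed under products since $U^{\otimes n}W^{\otimes n} = (UW)^{\otimes n}$ and under adjoints since $(U^{\otimes n})^{\dag} = (U^{\dag})^{\otimes n}$, with unit $\identity^{\otimes n}$; and $\mathcal{B}$ is the image of a unitary representation, with $P_{\pi}P_{\sigma} = P_{\pi\sigma}$ and $P_{\pi}^{\dag} = P_{\pi^{-1}}$. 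Since the commutant of a set equals the commutant of the algebra it generates, it suffices to work with $\mathcal{A}$ and $\mathcal{B}$.

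The first key step is to identify $\mathcal{A}$ with the symmetric subspace of $\BH^{\otimes n}\cong\BL(\HS^{\otimes n})$. Because every operator in $\BH$ is a linear combination of unitaries, $\Span\{U^{\otimes n} : U\in\UH\} = \Span\{A^{\otimes n} : A\in\BH\}$. A polarization identity --- expanding $\big(\sum_{i\in S}A_{i}\big)^{\otimes n}$ and summing over subsets $S\subseteq\{1,\dots,n\}$ with inclusion--exclusion signs --- reproduces $n!$ times the symmetrization of $A_{1}\otimes\cdots\otimes A_{n}$, and such symmetrized product tensors span the full symmetric power. Hence $\mathcal{A} = \mathrm{Sym}^{n}(\BH)$, the range of the symmetrization projector $\Pi_{\mathrm{sym}}(X) = \frac{1}{n!}\sum_{\pi\in S_{n}}P_{\pi}XP_{\pi}^{-1}$ on $\BL(\HS^{\otimes n})$. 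The second key step is the observation that conjugation by $P_{\pi}$ on $\BL(\HS^{\otimes n})\cong\BH^{\otimes n}$ is precisely the permutation of the $n$ tensor factors of $\BH^{\otimes n}$ (this is where the $\pi^{-1}$ in the definition of the $S_{n}$ action must be tracked carefully). Therefore an operator commutes with every $P_{\pi}$ if and only if it is fixed by $\Pi_{\mathrm{sym}}$, i.e.\ if and only if it lies in $\mathrm{Sym}^{n}(\BH)$. Combining the two steps gives $\Comm(\mathbb{C}S_{n}) = \mathrm{Sym}^{n}(\BH) = \mathcal{A}$, which is one of the two asserted equalities.

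For the remaining equality I would invoke the double commutant theorem for finite-dimensional self-adjoint algebras: a unital $*$-subalgebra $\mathcal{B}\subseteq\BL(\HS^{\otimes n})$ satisfies $\Bicomm(\mathcal{B}) = \mathcal{B}$. Taking the commutant of the identity $\Comm(\mathcal{B}) = \mathcal{A}$ just obtained yields $\Comm(\mathcal{A}) = \Bicomm(\mathcal{B}) = \mathcal{B} = \mathbb{C}S_{n}$. Thus each of $\mathbb{C}\mathrm{U}^{\otimes n}(\HS)$ and $\mathbb{C}S_{n}$ is the commutant of the other within $\BL(\HS^{\otimes n})$. I expect the main obstacle to be making the polarization step genuinely rigorous --- verifying both the inclusion--exclusion identity for $\big(\sum_{i\in S}A_{i}\big)^{\otimes n}$ and the claim that symmetrized decomposable tensors span all of $\mathrm{Sym}^{n}(\BH)$ --- since once this linear-algebra fact and the correct bookkeeping of the $S_{n}$ action are in place, the commutant manipulations and the appeal to the standard finite-dimensional bicommutant theorem are routine.
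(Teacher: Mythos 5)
The paper offers no proof of this lemma (it is stated as a classical result with citations), so your argument stands on its own; its overall architecture --- identify $\Comm(\mathbb{C}S_{n})$ with the symmetric subspace of $\BH^{\otimes n}$ via polarization, then recover the other commutant relation from the finite-dimensional double commutant theorem --- is indeed the standard route. However, there is one genuine gap, and it sits exactly at the step that connects the unitary group (which is what the lemma is about) to the full matrix algebra (where polarization lives). You assert that ``because every operator in $\BH$ is a linear combination of unitaries, $\Span\{U^{\otimes n}:U\in\UH\}=\Span\{A^{\otimes n}:A\in\BH\}$.'' This inference is invalid: the map $A\mapsto A^{\otimes n}$ is not linear, so writing $A=\sum_{i}c_{i}U_{i}$ and expanding gives $A^{\otimes n}=\sum_{i_{1},\dots,i_{n}}c_{i_{1}}\cdots c_{i_{n}}\,U_{i_{1}}\otimes\cdots\otimes U_{i_{n}}$, whose mixed terms are not $n$-th tensor powers of any single unitary and are not obviously in $\Span\{U^{\otimes n}\}$. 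Likewise your polarization identity expresses symmetrized products in terms of tensor powers of \emph{sums} of unitaries, and such sums are not unitary, so polarization alone cannot close this gap either. Without this step you have only proved $\Comm(\mathbb{C}S_{n})=\Span\{A^{\otimes n}:A\in\BH\}$, which is weaker than the statement about $\mathbb{C}\,\mathrm{U}^{\otimes n}(\HS)$.

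The claim you need is true but requires a genuine argument, essentially the ``unitarian trick.'' One fix: if a linear functional $\phi$ on $\BL(\HS^{\otimes n})$ annihilates all $U^{\otimes n}$, then $A\mapsto\phi(A^{\otimes n})$ is a holomorphic polynomial in the entries of $A$ vanishing on $\UH$; since $\UH$ is a real form of $\mathrm{GL}(\HS)$, the polynomial vanishes on all of $\mathrm{GL}(\HS)$ and hence, by density, on all of $\BH$, which forces $\Span\{U^{\otimes n}\}=\Span\{A^{\otimes n}\}$. Alternatively, show that the two sets have the same commutant: if $X$ commutes with $(e^{itH})^{\otimes n}$ for all real $t$ and Hermitian $H$, analytic continuation in $t$ gives commutation with $P^{\otimes n}$ for all positive definite $P$, and polar decomposition plus continuity then gives commutation with $A^{\otimes n}$ for all $A\in\BH$; since both $\Span\{U^{\otimes n}\}$ and $\Span\{A^{\otimes n}\}$ are unital $*$-algebras, the double commutant theorem identifies them. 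With either repair inserted, the remainder of your argument (conjugation by $P_{\pi}$ permuting the factors of $\BH^{\otimes n}$, fixed points equal to the symmetric subspace, and $\Comm(\mathcal{A})=\Bicomm(\mathbb{C}S_{n})=\mathbb{C}S_{n}$) goes through.
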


For the results that follow, we want a very similar lemma, but on $\BH\simeq \HS\otimes\HS^{*}$, rather than on $\HS\otimes\HS$, namely:

\begin{lemma}
	Let $\PUH=\{\Ad_{U}\;:\; U\in\UH\}\subset \mathrm{U}(\BH)$ where for any $A\in\BH$ and $U\in\UH$, the adjoint action of $U$ on $A$ is given by $\Ad_{U}(A) = UAU^{\dag}$, and let $S_{2}\subset \mathrm{U}(\BH)$ be the two element group comprising the identity map $\id$ and the trace-preserving operator $Q:A\mapsto \frac{2\Tr(A)}{\dH}\identity - A$ where $\dH = \dim(\HS)$ (it is easy to check that $Q$ is a unitary involution, $Q^{*} = Q$ and $Q^{2} = \id$).  Then the group algebras $\mathbb{C}\mathcal{\PUH}$ and $\mathbb{C}S_{2}$ are each the centralizer of the other within $\BBH$, the algebra of all bounded complex-linear superoperators acting on $\BH$.
	\label{lem:SchurWeylAnalog}
\end{lemma}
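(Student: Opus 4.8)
The plan is to diagonalize the conjugation representation of $\UH$ on $\BH$ and then read off both commutants from Schur's lemma together with the (finite-dimensional) double commutant theorem. We may assume $\dH:=\dim\HS\geq 2$; the case $\dH=1$ is trivial, since then $\BH\simeq\mathbb{C}$, the operator $Q$ reduces to $\id$, and both algebras equal $\mathbb{C}\id$. Identifying $\BH\simeq\HS\otimes\HS^{*}$ so that $\Ad_{U}$ corresponds to $U\otimes\bar U$, this representation of $\UH$ decomposes as $\BH=\mathbb{C}\identity\oplus\mathfrak{V}$, where $\mathfrak{V}:=\{A\in\BH:\Tr A=0\}$: on the first summand $\UH$ acts trivially, and on $\mathfrak{V}$ it acts by the ``adjoint'' representation $A\mapsto UAU^{\dag}$.

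The one non-elementary ingredient is that $\mathfrak{V}$ is irreducible under $\UH$ — equivalently, that $\mathfrak{sl}_{\dH}(\mathbb{C})$ is a simple Lie algebra for $\dH\geq 2$ — and that, having complex dimension $\dH^{2}-1\neq 1$, it is inequivalent to the trivial summand. Granting this, Schur's lemma says that any superoperator in $\Comm(\PUH)$ must preserve each summand and act on each as a scalar; hence $\Comm(\mathbb{C}\PUH)=\Comm(\PUH)$ is two-dimensional, spanned by the orthogonal projectors $\mathcal{P}_{0}\colon A\mapsto\frac{\Tr A}{\dH}\identity$ onto $\mathbb{C}\identity$ and $\mathcal{P}_{1}:=\id-\mathcal{P}_{0}$ onto $\mathfrak{V}$.

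Next I would show $\mathbb{C}S_{2}$ is exactly this span. A one-line computation gives $Q(\identity)=\identity$ and $Q(A)=-A$ for $A\in\mathfrak{V}$, so $Q=\mathcal{P}_{0}-\mathcal{P}_{1}$ and $\id=\mathcal{P}_{0}+\mathcal{P}_{1}$; since $\mathcal{P}_{0},\mathcal{P}_{1}$ are linearly independent, $\mathbb{C}S_{2}=\Span\{\id,Q\}=\Span\{\mathcal{P}_{0},\mathcal{P}_{1}\}=\Comm(\mathbb{C}\PUH)$, which is the first of the two asserted equalities. For the second, observe that $\mathbb{C}\PUH$ is a unital $*$-subalgebra of $\BBH$: it contains $\id$, it is closed under composition because $\PUH$ is a group, and $\Ad_{U}^{*}=\Ad_{U^{\dag}}\in\PUH$. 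By the double commutant theorem (elementary in finite dimensions), $\mathbb{C}\PUH=\Bicomm(\mathbb{C}\PUH)$, and combining this with the previous identity, $\Comm(\mathbb{C}S_{2})=\Comm(\Comm(\mathbb{C}\PUH))=\mathbb{C}\PUH$. (Alternatively, one can bypass the double commutant theorem: $\mathcal{P}_{0}=\int_{\UH}\Ad_{U}\,\rmd U$ is a limit of convex combinations of elements of $\PUH$ and hence lies in the finite-dimensional, closed subspace $\mathbb{C}\PUH$, so $\mathcal{P}_{0},\mathcal{P}_{1}$ are central idempotents of $\mathbb{C}\PUH$; Burnside's theorem applied to the irreducible action on $\mathfrak{V}$ then forces $\mathbb{C}\PUH=\mathcal{P}_{0}\BBH\mathcal{P}_{0}\oplus\mathcal{P}_{1}\BBH\mathcal{P}_{1}$, which is precisely the commutant of the two complementary projectors spanning $\mathbb{C}S_{2}$.)

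The main obstacle is really the only non-bookkeeping step: establishing the irreducibility of $\mathfrak{V}$ and its inequivalence with $\mathbb{C}\identity$ in the decomposition $\BH=\mathbb{C}\identity\oplus\mathfrak{V}$. Once that is secured, everything else follows from Schur's lemma and the double commutant theorem, exactly paralleling how Schur--Weyl duality is deduced from the decomposition of $\HS^{\otimes n}$ into Weyl modules.
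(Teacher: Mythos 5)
Your proof is correct, but it takes a genuinely different route from the paper's. You diagonalize the conjugation representation once and for all, decomposing $\BH=\mathbb{C}\identity\oplus\mathfrak{V}$ into the trivial summand and the traceless operators, invoke irreducibility of $\mathfrak{V}$ under $U\mapsto\Ad_{U}$ (equivalently, simplicity of $\mathfrak{sl}_{\dH}(\mathbb{C})$ for $\dH\geq 2$) together with inequivalence of the two summands, and then read off $\Comm(\mathbb{C}\PUH)=\Span\{\mathcal{P}_{0},\mathcal{P}_{1}\}=\Span\{\id,Q\}=\mathbb{C}S_{2}$ from Schur's lemma; the identification $Q=\mathcal{P}_{0}-\mathcal{P}_{1}$ is exactly right. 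The paper instead proves the first inclusion by a bare-hands computation: for $X$ commuting with every $\Ad_{U}$ it shows $X(A)\in\Bicomm(A)$, evaluates on the rank-one projector $\ketbra{1}{1}$, propagates by permutation matrices to all $\ketbra{i}{i}$, then by linearity and unitary diagonalization to diagonal, normal, and finally all operators, arriving at $X(A)=aA+b\Tr(A)\identity$. That calculation is self-contained and elementary, in effect reproving by hand the irreducibility/multiplicity statement that you import as a standard (cited but unproved) fact; your version is shorter and more conceptual at the cost of leaning on that input. For the reverse direction both you and the paper appeal to a double-commutant theorem (you justify its applicability by noting $\mathbb{C}\PUH$ is a unital $*$-subalgebra, with $\Ad_{U}^{*}=\Ad_{U^{\dag}}$, which the paper leaves implicit), and your Burnside alternative --- using $\mathcal{P}_{0}=\int\Ad_{U}\,\rmd\eta(U)\in\mathbb{C}\PUH$ and irreducibility on $\mathfrak{V}$ to get $\mathbb{C}\PUH=\mathcal{P}_{0}\BBH\mathcal{P}_{0}\oplus\mathcal{P}_{1}\BBH\mathcal{P}_{1}$ directly --- is a nice, more constructive substitute; your explicit treatment of the degenerate case $\dH=1$ (where $Q=\id$) is also a small point the paper glosses over.
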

\begin{proof}
	First, consider an $X\in\BBH$ that commutes with all $\Ad_{U}\in\PUH$.  Then for any $A\in \BH$, and any $\Omega\in\UH$ such that $\Omega A \Omega^{\dag} = A$, $X(A) = X(\Omega A \Omega^{\dag}) = \Omega X(A)\Omega^{\dag}$, so that $X(A)$ commutes with all unitary operators that commute with $A$.  Since $\Comm(A)$ is the complex-linear span of the unitary stabilizer $\Stab_{\UH}(A)$, this implies that $X(A)$ commutes with every operator in $\Comm(A)$, and therefore $X(A)\in\Bicomm(A)$.  Furthermore, $X\big(\ketbra{1}{1}\big)\in\Bicomm\big(\ketbra{1}{1}\big)$ implies that $X\big(\ketbra{1}{1}\big) = a\ketbra{1}{1} + b\identity$ for some coefficients $a,b\in\mathbb{C}$.  Since for any $i=1,\dots, \dH$ there exists a permutation matrix $\pi\in S_{\dH}\subset\UH$ such that $\pi|1\rangle = |i\rangle$, it follows that 
	\begin{equation}
		X(\ketbra{i}{i}) = X(\pi\ketbra{1}{1}\pi^{\dag}) = \pi X(\ketbra{1}{1})\pi^{\dag} = a\ketbra{i}{i} + b\identity.
	\end{equation}
	By complex linearity, any diagonal matrix $D$ transforms as \begin{equation}X(D) = \sum_{i}D_{ii}X(\ketbra{i}{i}) = a\sum_{i}D_{ii}\ketbra{i}{i} + b\sum_{i}D_{ii}\identity  = aD + b\Tr(D)\identity.\end{equation}Since any normal operator $A\in\BH$ is unitarily diagonalizable as $A = \Omega D \Omega^{\dag}$ for some $\Omega\in\UH$ and diagonal $D$, it follows that $X$ acts on normal operators as 
	\begin{equation}
		X(A) = X(\Omega D \Omega^{\dag}) = \Omega X(D)\Omega^{\dag} = \Omega\big[aD + b\Tr(D)\identity\big]\Omega^{\dag} = aA + b\Tr(A)\identity.
	\end{equation} Finally, since $\BH$ is spanned by the normal operators, complex linearity implies that $X$ must act as $X(A) = aA + b\Tr(A)\identity$ on all operators $A\in\BH$, so $X$ is an element of the complex span of the identity superoperator $\id$ and the map $A\mapsto \Tr(A)\identity$, which is identical to the complex span of $\id$ and the map $Q:A\mapsto\frac{2\Tr(A)}{\dH}\identity - A$.  Since $\id$ and $Q$ both trivially commute with all $\Ad_{U}\in\PUH$, defining $S_{2}:=\{\id, Q\}$, we get that $\mathbb{C}S_{2}$ is the centralizer (i.e., commutant) of $\mathbb{C}\PUH$.
	
	That $\mathbb{C}\PUH$ is the centralizer of $\mathbb{C}S_{2}$ can now be seen as a consequence of either the Schur double centralizer theorem \cite{Schur1927, Procesi2006} or the von Neumann double commutant theorem \cite{Neumann1930, Arveson1976}.
\end{proof}

\begin{remark}
	Let $Q_{\pm} := \frac{\id \pm Q}{2}$ be the projectors onto the $\pm 1$ eigenspaces of $Q$.  Then $Q_{+}(A) = \frac{\Tr(A)}{\dH}\identity$ and $Q_{-}(A) = A - \frac{\Tr(A)}{\dH}\identity$.  So the eigenspaces of $Q$ are the 1-dimensional space spanned by identity $\mathbb{C}\identity = \{a\identity\;:\;a\in\mathbb{C}\}$ and the $(\dH^{2}-1)$-dimensional space of trace zero operators, i.e., $\mathrm{sl}(\HS)$.
\end{remark}

\begin{lemma}
	For any $B\in\BH$, let $\Ad_{B}\in \BBH$ denote the adjoint operator $\Ad_{B}(A) = BAB^{\dag}$ and define the group average 
	\begin{subequations}
	\begin{equation}
		\overline{\Ad_{B}} := \int_{W\in\UH}\rmd\eta(W) \Ad_{W}\circ\Ad_{B}\circ\Ad_{W}^{*} = \int_{W\in\UH}\rmd\eta(W) \Ad_{WBW^{\dag}},
	\end{equation}
	in other words, $\overline{\Ad_{B}}$ acts on an arbitrary $A\in\BH$ as
	\begin{equation}
		\overline{\Ad_{B}}(A) = \int_{W\in\UH}\rmd\eta(W)WBW^{\dag}AWB^{\dag}W^{\dag},
	\end{equation}
	\end{subequations}
	where $\eta$ is the normalized Haar measure.   Then $\overline{\Ad_{B}}$ is given by the map 
	\begin{equation}
		\overline{\Ad_{B}}(A) = \frac{\dH|\Tr(B)|^{2} - \|B\|_{\mathrm{HS}}^{2}}{\dH(\dH^{2}-1)}A + \frac{\dH\|B\|_{\mathrm{HS}}^{2} - |\Tr(B)|^{2}}{\dH^{2}-1}\frac{\Tr(A)}{\dH}\identity
	\end{equation}
	where $\dH=\dim(\HS)$, and when $\dH=1$, this is simply $\overline{\Ad_{B}}(A) = |B|^{2}A$.  For $X\in\UH$, this becomes
	\begin{equation}
		\overline{\Ad_{X}}(A) = \frac{|\Tr(X)|^{2} - 1}{\dH^{2}-1}A + \frac{\dH^{2} - |\Tr(X)|^{2}}{\dH^{2}-1}\frac{\Tr(A)}{\dH}\identity.
	\end{equation}
	\label{lem:adjointAverage}
\end{lemma}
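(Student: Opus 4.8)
The plan is to use Lemma~\ref{lem:SchurWeylAnalog} to pin down the form of $\overline{\Ad_{B}}$ and then fix the two remaining scalar parameters by evaluating two independent functionals. First I would observe that left-invariance of the Haar measure under $W\mapsto W_{0}W$ gives, for every $W_{0}\in\UH$,
\begin{equation}
\Ad_{W_{0}}\circ\overline{\Ad_{B}}\circ\Ad_{W_{0}}^{*} = \int_{W\in\UH}\rmd\eta(W)\,\Ad_{(W_{0}W)B(W_{0}W)^{\dag}} = \overline{\Ad_{B}},
\end{equation}
so $\overline{\Ad_{B}}$ commutes with every $\Ad_{W_{0}}\in\PUH$, i.e., it lies in the commutant of $\mathbb{C}\PUH$ inside $\BBH$. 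By Lemma~\ref{lem:SchurWeylAnalog} (and the remark following it) that commutant is $\Span\{\id,Q\}$, equivalently the set of maps $A\mapsto aA + b\,\Tr(A)\identity$; hence $\overline{\Ad_{B}}(A) = aA + b\,\Tr(A)\identity$ for scalars $a=a(B)$, $b=b(B)$, valid whenever $\dH>1$. The degenerate case $\dH=1$ is immediate, since then $\Ad_{B}(A)=|B|^{2}A$ is already conjugation-invariant, giving $\overline{\Ad_{B}}(A)=|B|^{2}A$.

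Next I would determine $a$ and $b$ by pairing with the superoperator trace and with evaluation at the identity. Under the identification $\BH\simeq\HS\otimes\HS^{*}$ one has $\Ad_{B}\simeq B\otimes\bar{B}$ (or one computes directly over the basis $\{\ketbra{k}{l}\}$), so $\Tr_{\BBH}(\Ad_{B})=|\Tr(B)|^{2}$; since the group average is a convex combination of the conjugation-invariant operators $\Ad_{WBW^{\dag}}$ and the trace is linear, $\Tr_{\BBH}(\overline{\Ad_{B}})=|\Tr(B)|^{2}$, while $\Tr_{\BBH}(aA+b\Tr(A)\identity)=a\dH^{2}+b\dH$, giving the relation $a\dH^{2}+b\dH=|\Tr(B)|^{2}$. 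For the second relation, $\overline{\Ad_{B}}(\identity)=\int\rmd\eta(W)\,WBB^{\dag}W^{\dag}$ has trace $\Tr(BB^{\dag})=\|B\|_{\mathrm{HS}}^{2}$, whereas the ansatz gives $\Tr(\overline{\Ad_{B}}(\identity))=(a+b\dH)\dH$, so $a\dH+b\dH^{2}=\|B\|_{\mathrm{HS}}^{2}$. Solving this $2\times 2$ linear system yields precisely the stated $a$ and $b$, and substituting $\|X\|_{\mathrm{HS}}^{2}=\Tr(X^{\dag}X)=\dH$ for $X\in\UH$ recovers the displayed specialization.

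I do not expect a real obstacle; the two points needing care are (i) the hypothesis $\dH>1$ implicit in invoking Lemma~\ref{lem:SchurWeylAnalog} (for $\dH=1$ the maps $\id$ and $Q$ coincide and the parametrization collapses), so the $\dH=1$ case is dispatched separately, and (ii) bookkeeping the normalization of the second term, i.e., matching $b\,\Tr(A)\identity$ against the form $\tfrac{\dH\|B\|_{\mathrm{HS}}^{2}-|\Tr(B)|^{2}}{\dH^{2}-1}\tfrac{\Tr(A)}{\dH}\identity$ written in the statement. If one prefers to avoid the $\HS\otimes\HS^{*}$ identification, $\Tr_{\BBH}(\Ad_{B})$ can instead be evaluated directly as $\sum_{k,l}\langle\ketbra{k}{l},\Ad_{B}(\ketbra{k}{l})\rangle_{\mathrm{HS}}=\Tr(B)\,\overline{\Tr(B)}$, which is again pure bookkeeping.
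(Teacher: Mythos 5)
Your argument is correct, and its first half is exactly the paper's: Haar invariance puts $\overline{\Ad_{B}}$ in the commutant of $\mathbb{C}\PUH$, which Lemma~\ref{lem:SchurWeylAnalog} identifies with $\Span_{\mathbb{C}}\{\id,Q\}$, i.e.\ maps of the form $A\mapsto aA+b\Tr(A)\identity$. Where you diverge is in pinning down $a$ and $b$: the paper computes the orthogonal projection of $\Ad_{B}$ onto this two-dimensional commutant with respect to the Hilbert--Schmidt inner product on $\BBH$, which requires evaluating $\langle\id,\Ad_{B}\rangle$, $\langle\tilde{Q},\Ad_{B}\rangle$, $\langle\id,\id\rangle$, $\langle Q,\id\rangle$, and $\langle\tilde{Q},\tilde{Q}\rangle$; you instead impose two linear constraints --- the superoperator trace $\Tr_{\BBH}(\overline{\Ad_{B}})=|\Tr(B)|^{2}$ and $\Tr\big(\overline{\Ad_{B}}(\identity)\big)=\|B\|_{\mathrm{HS}}^{2}$ --- and solve the resulting $2\times2$ system (whose determinant $\dH^{2}(\dH^{2}-1)$ is nonzero for $\dH>1$). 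This is lighter on bookkeeping, since it never requires orthogonalizing $Q$ against $\id$ or computing norms of superoperators, and your separate treatment of $\dH=1$ (where $\id$ and $Q$ coincide and the parametrization degenerates) is a point the paper's proof glosses over. One small wording slip: the individual $\Ad_{WBW^{\dag}}$ are not themselves conjugation-invariant; what you actually use is that $\Tr_{\BBH}(\Ad_{WBW^{\dag}})=|\Tr(WBW^{\dag})|^{2}=|\Tr(B)|^{2}$ for every $W$, so the trace passes through the Haar average unchanged --- the conclusion is unaffected.
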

\begin{proof}
	Because of the invariance of the Haar measure, $\overline{\Ad_{B}}$ is readily seen to be the orthogonal projection of $\Ad_{B}$ into the centralizer of $\PUH = \{\Ad_{W}\;:\;W\in\UH\}$, which, by Lemma \ref{lem:SchurWeylAnalog}, is $\mathbb{C}S_{2} = \Span_{\mathbb{C}}\{\id, Q\}$.  In other words, 
	\begin{align}
		\overline{\Ad_{B}} & = \frac{\langle \id, \Ad_{B}\rangle}{\langle \id, \id\rangle}\id + \frac{\langle \tilde{Q}, \Ad_{B}\rangle}{\langle \tilde{Q}, \tilde{Q}\rangle }\tilde{Q}
	\end{align}
	where $\tilde{Q} := Q-\frac{\langle Q,\id\rangle}{\langle \id, \id\rangle}\id$ is the component of $Q$ orthogonal to $\id$.  It remains simply to compute the inner products, which can be computed using an orthonormal basis $\{|j\rangle\}$ for $\HS$ as $\langle X, Y\rangle = \Tr(X^{*}\circ Y) = \sum_{j,k}\big\langle \ketbra{j}{k}, X^{*}\circ Y\big(\ketbra{j}{k}\big)\big\rangle = \sum_{j,k}\big\langle X\big(\ketbra{j}{k}\big), Y\big(\ketbra{j}{k}\big)\big\rangle$.  So, first of all,
	\begin{subequations}
	\begin{align}
		\langle \id, \id\rangle & = \sum_{j,k}\big\langle \ketbra{j}{k}, \ketbra{j}{k}\big\rangle = \dH^{2}\\
		\langle Q, \id\rangle & = \sum_{j,k}\big\langle Q(\ketbra{j}{k}), \ketbra{j}{k}\big\rangle = \sum_{j\neq k}\big\langle -\ketbra{j}{k}, \ketbra{j}{k}\big\rangle + \sum_{j}\left\langle \frac{2}{\mu_{i}}\identity_{V_{i}} - \ketbra{j}{j}, \ketbra{j}{j}\right\rangle\nonumber\\
		& = \big(\dH-\dH^{2}\big) + (2-\dH) = 2-\dH^{2},
	\end{align}
	\end{subequations}
	so that $\tilde{Q}(A) = \frac{2\Tr(A)}{\dH}\identity - \frac{2}{\dH}A$.  Now,
	\begin{subequations}
	\begin{align}
		\langle \id, \Ad_{B}\rangle & = \sum_{j,k}\big\langle \ketbra{j}{k}, B\ketbra{j}{k}B^{\dag}\big\rangle = \sum_{j,k}\bra{j}B\ket{j}\bra{k}B^{\dag}\ket{k} = |\Tr(B)|^{2}\\
		\langle \tilde{Q},  \Ad_{B}\rangle & = \sum_{j,k}\big\langle \frac{2\Tr(\ketbra{j}{k})}{\dH}\identity - \frac{2}{\dH^{2}}\ketbra{j}{k}, B\ketbra{j}{k}B^{\dag}\big\rangle = \frac{2}{\dH}\sum_{j}\Tr(B\ketbra{j}{j}B^{\dag}) - \frac{2}{\dH^{2}}\sum_{j,k}\bra{j}B\ket{j}\bra{k}B^{\dag}\ket{k}\nonumber\\
		& = \frac{2}{\dH}\|B\|_{\mathrm{HS}}^{2} - \frac{2}{\dH^{2}}|\Tr(B)|^{2} = \frac{2}{\dH^{2}}\Big(\dH\|B\|_{\mathrm{HS}}^{2} - |\Tr(B)|^{2}\Big)\\
		\langle \tilde{Q},  \tilde{Q}\rangle & = \sum_{j,k}\big\langle \tilde{Q}(\ketbra{j}{k}), \tilde{Q}(\ketbra{j}{k})\big\rangle = \frac{4}{\dH^{4}}\sum_{j\neq k}\big\langle \ketbra{j}{k}, \ketbra{j}{k}\big\rangle + \sum_{j}\left\langle \frac{2}{\dH}\identity - \frac{2}{\dH^{2}}\ketbra{j}{j}, \frac{2}{\dH}\identity - \frac{2}{\dH^{2}}\ketbra{j}{j}\right\rangle\nonumber\\
		& = \frac{4(\dH-1)}{\dH^{3}} + 4 - \frac{8}{\dH^{2}} + \frac{4}{\dH^{3}} = 4\left(1-\frac{1}{\dH^{2}}\right),
	\end{align}
	\end{subequations}
	whence,
	\begin{subequations}
	\begin{align}
		\overline{\Ad_{B}} & = \frac{\langle \id, \Ad_{B}\rangle}{\langle \id, \id\rangle}\id + \frac{\langle \tilde{Q}, \Ad_{B}\rangle}{\langle \tilde{Q}, \tilde{Q}\rangle }\tilde{Q} = \frac{|\Tr(B)|^{2}}{\dH}\id + \frac{\dH\|B\|_{\mathrm{HS}}^{2} - |\Tr(B)|^{2}}{\dH^{2}-1}\tilde{Q}\\
		\overline{\Ad_{B}}(A) & = \frac{\dH|\Tr(B)|^{2} - \|B\|_{\mathrm{HS}}^{2}}{\dH(\dH^{2}-1)}A + \frac{\dH\|B\|_{\mathrm{HS}}^{2} - |\Tr(B)|^{2}}{\dH^{2}-1}\frac{\Tr(A)}{\dH}\identity.
	\end{align}
	\end{subequations}
\end{proof}

\begin{lemma}
	Let $\HS = \bigoplus_{i=1}^{\kappa}V_{i}$ be an orthogonal decomposition and $\mu_{i} = \dim(V_{i})$.  For any $B = \bigoplus_{i=1}^{\kappa} B_{i}$ with $B_{i}\in\BL(V_{i})$, define the group average
	\begin{equation}
		\overline{\Ad_{B}}^{\bigoplus}:=\int_{R\in\bigoplus_{i=1}^{\kappa}\mathrm{U}(V_{i})}\hspace{-35pt}\rmd\eta(R) \; \Ad_{R}\circ\Ad_{B}\circ\Ad_{R},
	\end{equation}
	in other words, for any $A\in\BH$, 
	\begin{equation}
		\overline{\Ad_{B}}^{\bigoplus}(A) = \int_{R\in\bigoplus_{i=1}^{\kappa}\mathrm{U}(V_{i})}\hspace{-35pt}\rmd\eta(R) \; RBR^{\dag}ARB^{\dag}R^{\dag}.
	\end{equation}
	Then 
	\begin{subequations}
	\begin{align}
		\overline{\Ad_{B}}^{\bigoplus}(A) & = \bigoplus_{i=1}^{\kappa} \frac{\mu_{i}|\Tr(B_{i})|^{2} - \|B_{i}\|_{\mathrm{HS}}^{2}}{\mu_{i}(\mu_{i}^{2}-1)}A_{(ii)} + \frac{\mu_{i}\|B\|_{\mathrm{HS}}^{2} - |\Tr(B)|^{2}}{\mu_{i}^{2}-1}\frac{\Tr(A_{(ii)})}{\mu_{i}}\identity_{V_{i}}\nonumber\\
		& \oplus \bigoplus_{i\neq j=1}^{\kappa} \frac{\Tr(B_{i})\overline{\Tr(B_{j})}}{\mu_{i}\mu_{j}}A_{(ij)}.
	\end{align}
	\end{subequations}
	\label{lem:subspaceAdjointAverage}
\end{lemma}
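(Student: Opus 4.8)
The plan is to reduce the whole computation to the full-group average of Lemma~\ref{lem:adjointAverage} by exploiting block-diagonality. Fix the decomposition $\HS = \bigoplus_{i=1}^{\kappa}V_{i}$ and write each $A\in\BH$ through its blocks $A_{(ij)}$, the operator from $V_{j}$ to $V_{i}$, so that $\BH$ is the direct sum over all pairs $(i,j)$ of the spaces of such block operators. Since $B=\bigoplus_{i}B_{i}$ and every $R=\bigoplus_{i}R_{i}$ in the averaging group $\bigoplus_{i}\mathrm{U}(V_{i})$ are block-diagonal, $RBR^{\dag}=\bigoplus_{i}R_{i}B_{i}R_{i}^{\dag}$ and the integrand acts block-wise, $\big(RBR^{\dag}ARB^{\dag}R^{\dag}\big)_{(ij)} = R_{i}B_{i}R_{i}^{\dag}\,A_{(ij)}\,R_{j}B_{j}^{\dag}R_{j}^{\dag}$; in particular the average will respect this $\kappa^{2}$-fold block decomposition with no mixing between blocks. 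Moreover the Haar measure on $\bigoplus_{i}\mathrm{U}(V_{i})$ is the product $\prod_{i}\rmd\eta_{i}(R_{i})$ of the Haar measures on the factors, so the average factorizes block by block and, within a block, the factors $R_{i}$ (and $R_{j}$) that appear are independent.

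\textbf{Diagonal blocks.} For a block $(i,i)$ only $R_{i}$ appears, and integrating over all the other factors does nothing, so the $(i,i)$ component of $\overline{\Ad_{B}}^{\bigoplus}(A)$ is exactly $\int_{R_{i}\in\mathrm{U}(V_{i})}R_{i}B_{i}R_{i}^{\dag}A_{(ii)}R_{i}B_{i}^{\dag}R_{i}^{\dag}\,\rmd\eta_{i}(R_{i}) = \overline{\Ad_{B_{i}}}(A_{(ii)})$ computed on the $\mu_{i}$-dimensional space $V_{i}$. Lemma~\ref{lem:adjointAverage}, applied on $V_{i}$, evaluates this to $\frac{\mu_{i}|\Tr(B_{i})|^{2}-\|B_{i}\|_{\mathrm{HS}}^{2}}{\mu_{i}(\mu_{i}^{2}-1)}A_{(ii)} + \frac{\mu_{i}\|B_{i}\|_{\mathrm{HS}}^{2}-|\Tr(B_{i})|^{2}}{\mu_{i}^{2}-1}\frac{\Tr(A_{(ii)})}{\mu_{i}}\identity_{V_{i}}$, and to $|B_{i}|^{2}A_{(ii)}$ in the degenerate case $\mu_{i}=1$, as recorded in Lemma~\ref{lem:adjointAverage}.

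\textbf{Off-diagonal blocks.} For $i\neq j$ the factors $R_{i}$ and $R_{j}$ are independent, so the $(i,j)$ component is $\big(\int_{R_{i}}R_{i}B_{i}R_{i}^{\dag}\,\rmd\eta_{i}\big)\,A_{(ij)}\,\big(\int_{R_{j}}R_{j}B_{j}^{\dag}R_{j}^{\dag}\,\rmd\eta_{j}\big)$. Each single-operator average $C\mapsto\int_{R\in\mathrm{U}(V)}RCR^{\dag}\,\rmd\eta(R)$ is, by invariance of the Haar measure, the orthogonal projection onto the conjugation-commutant of $\mathrm{U}(V)$, which is $\mathbb{C}\identity_{V}$; being trace-preserving it must be $C\mapsto \frac{\Tr(C)}{\dim V}\identity_{V}$ (this can be read off from the $S_{2}$-decomposition used in Lemma~\ref{lem:SchurWeylAnalog}, or straight from Schur's lemma). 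Applying this with $C=B_{i}$ and with $C=B_{j}^{\dag}$ gives $\frac{\Tr(B_{i})}{\mu_{i}}\identity_{V_{i}}$ and $\frac{\overline{\Tr(B_{j})}}{\mu_{j}}\identity_{V_{j}}$, so the $(i,j)$ block is $\frac{\Tr(B_{i})\overline{\Tr(B_{j})}}{\mu_{i}\mu_{j}}A_{(ij)}$. Assembling the diagonal and off-diagonal contributions yields the stated formula.

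\textbf{Main obstacle.} There is essentially no deep step here; the work is organizational, sitting on top of Lemma~\ref{lem:adjointAverage}. The points that genuinely need care are: (i) checking that block-diagonality of $B$ and $R$ really forces the average to preserve the full pairwise block decomposition of $\BH$, so that the problem decouples into independent block computations; (ii) stating and using cleanly the elementary first-moment identity $\int RCR^{\dag}\,\rmd\eta = \frac{\Tr(C)}{\dim V}\identity$ for the off-diagonal blocks; and (iii) handling the edge cases, invoking the $\mu_{i}=1$ clause of Lemma~\ref{lem:adjointAverage} on one-dimensional subspaces and observing that when $\kappa=1$ there are no off-diagonal terms, so the result reduces consistently to Lemma~\ref{lem:adjointAverage} itself.
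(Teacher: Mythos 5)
Your proposal is correct and follows essentially the same route as the paper's proof: decompose the average block-wise, apply Lemma~\ref{lem:adjointAverage} to the diagonal blocks, and use the first-moment identity $\int R\,C\,R^{\dag}\,\rmd\eta(R) = \frac{\Tr(C)}{\dim V}\identity$ for the off-diagonal blocks. Your version is if anything slightly more careful, spelling out the product structure of the Haar measure, the Schur's-lemma justification of the first-moment identity, and the $\mu_{i}=1$ edge case, and it correctly writes $B_{i}$ in the diagonal-block coefficients where the paper's displayed statement has a small typo ($\|B\|_{\mathrm{HS}}^{2}$, $|\Tr(B)|^{2}$ in place of $\|B_{i}\|_{\mathrm{HS}}^{2}$, $|\Tr(B_{i})|^{2}$).
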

\begin{proof}
	First, observe that
	\begin{equation}
		\overline{\Ad_{B}}^{\bigoplus}(A) = \bigoplus_{i=1}^{\kappa}\overline{\Ad_{B_{i}}}(A_{(ii)}) \oplus\bigoplus_{i\neq j=1}^{\kappa} \int_{R_{i}\in\mathrm{U}(V_{i})}\hspace{-25pt}\rmd\eta(R_{i})\int_{R_{j}\in\mathrm{U}(V_{j})}\hspace{-25pt}\rmd\eta(R_{j}) \; R_{i}B_{i}R_{i}^{\dag}A_{(ij)}R_{j}B_{j}^{\dag}R_{j}^{\dag}
	\end{equation}
	where $\overline{\Ad_{B_{i}}}$ is as defined in Lemma \ref{lem:adjointAverage}.  The proof is then completed by invoking Lemma \ref{lem:adjointAverage} and by observing that under normalized Haar measure, 
	\begin{equation}
		\int_{R_{i}\in\mathrm{U}(V_{i})}\hspace{-25pt}\rmd\eta(R_{i}) \; R_{i}B_{i}R_{i}^{\dag} = \frac{\Tr(B)}{\mu_{i}}\identity_{V_{i}}.
	\end{equation}
\end{proof}

\bigskip
\section{Characteristic Functions of Distributions on \texorpdfstring{$\mathbb{Z}$}{Z}}

\begin{figure}[ht]
	\centering
	\subfloat{
		\centering
		\includegraphics[scale=0.48,clip=true]{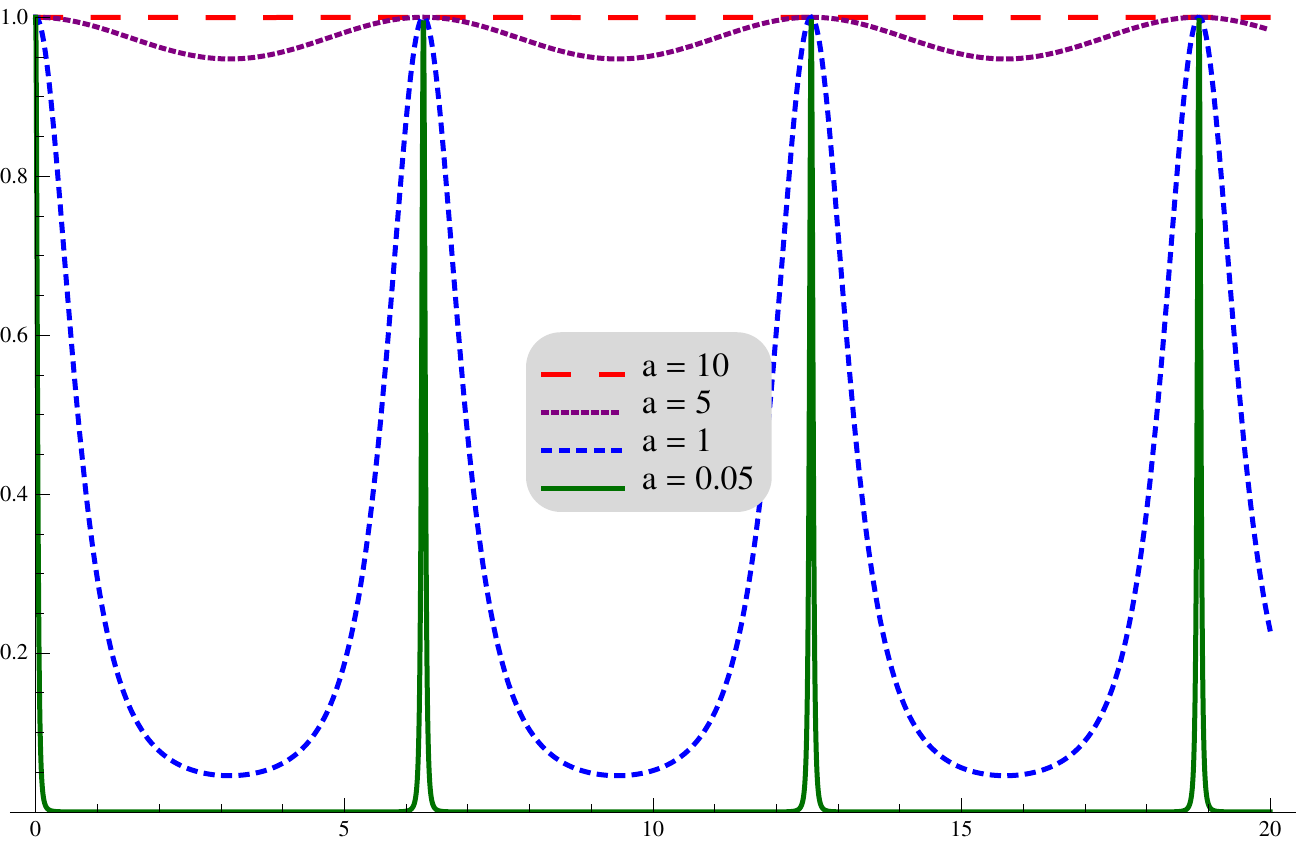}
	}
	\subfloat{
		\centering
		\includegraphics[scale=0.48,clip=true]{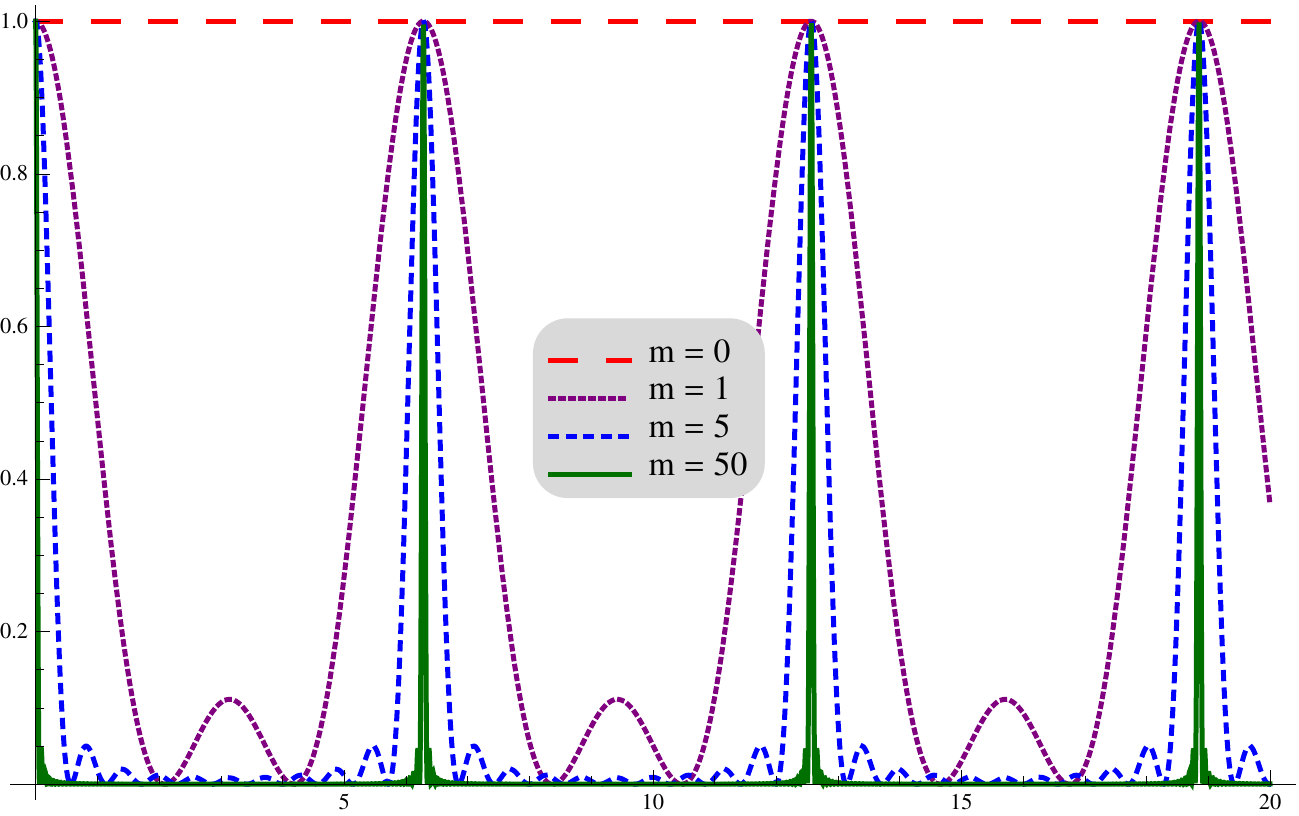}
	}
	\newline
	\subfloat{
		\centering
		\includegraphics[scale=0.48,clip=true]{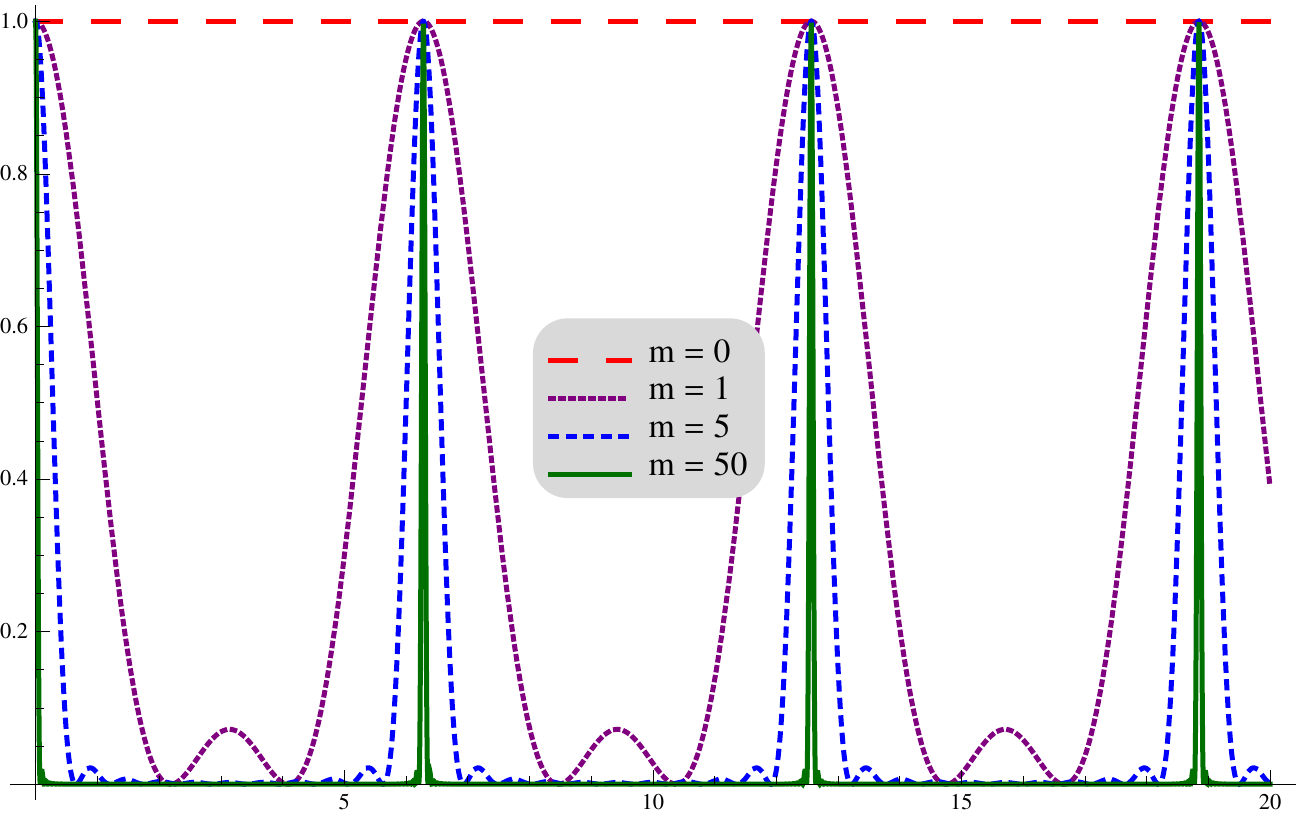}
	}
	\subfloat{
		\centering
		\includegraphics[scale=0.48,clip=true]{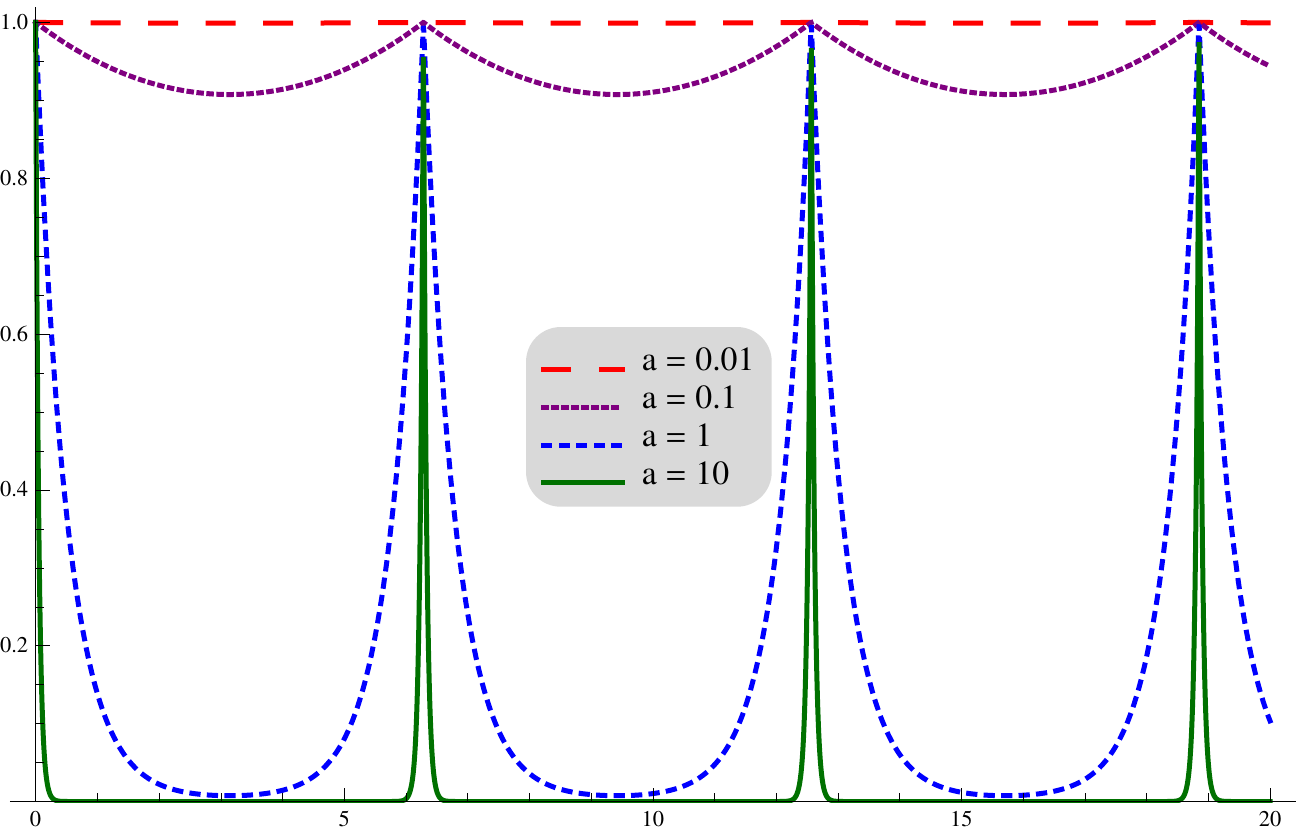}
	}
	\newline
	\subfloat{
		\centering
		\includegraphics[scale=0.48,clip=true]{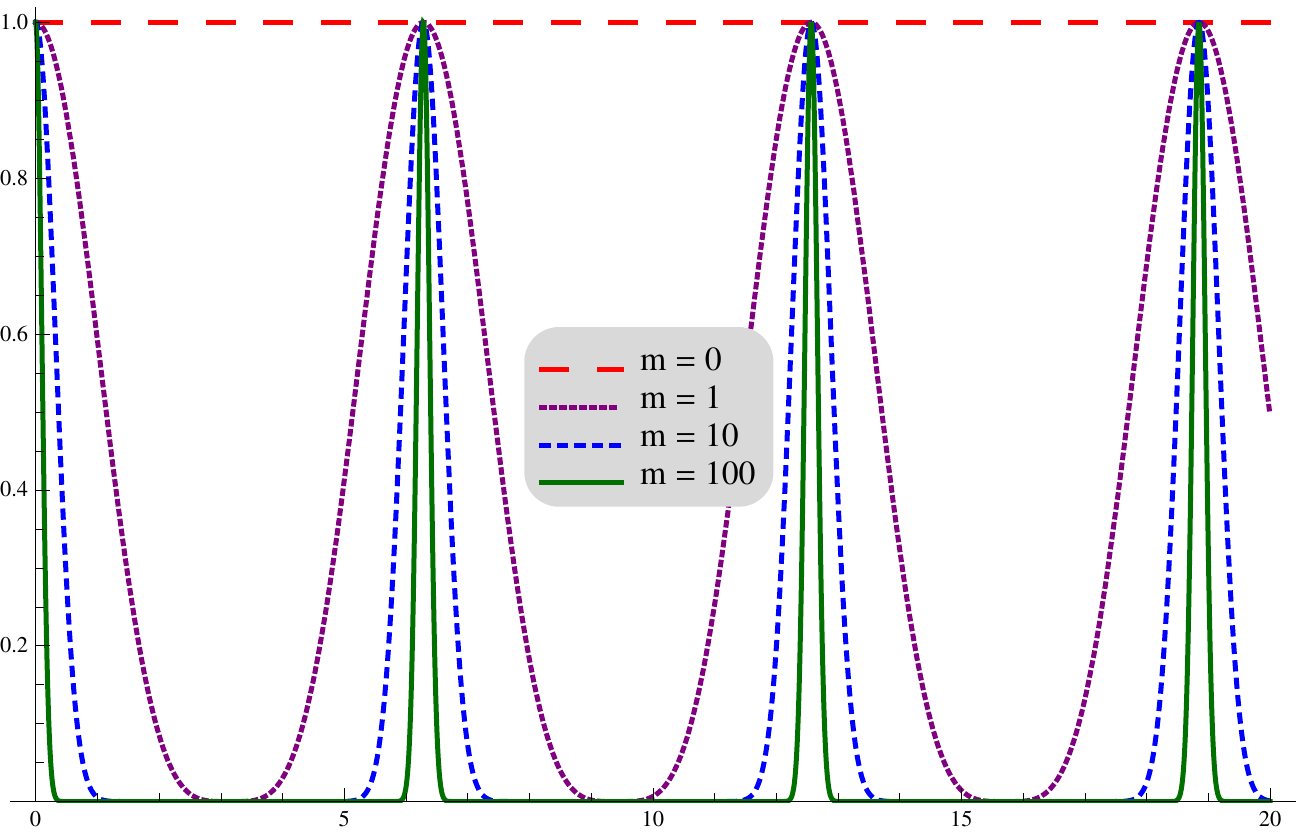}
	}
	\subfloat{
		\centering
		\includegraphics[scale=0.48,clip=true]{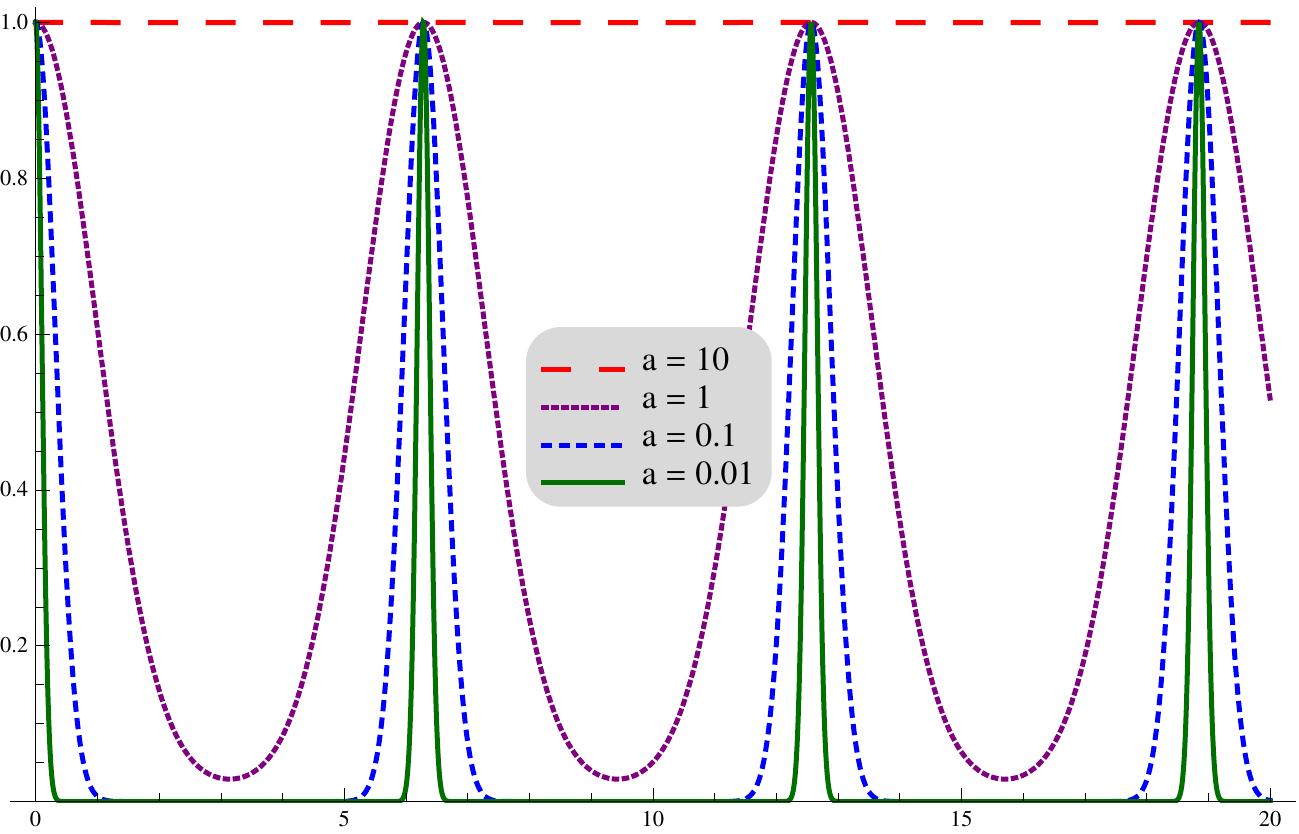}
	}
	\caption{Plots of the squared characteristic functions $|\varphi_{\mathbb{P}}(t)|^{2}$ for several probability distributions $\mathbb{P}(k)$ on $\mathbb{Z}$.  In left-to-right, top-to-bottom order, they are the characteristic functions for: (a) the exponential distribution, (b) the truncated uniform distribution, (c) the semi-circular distribution, (d) the Cauchy-Lorentz distribution, (e) the binomial distribution, and (f) the normal distribution.}
	\label{fig:squaredCharFunctions}
\end{figure}
We take a look at the characteristic functions for several common distributions on $\mathbb{Z}$.  The squared characteristic functions are plotted in Figure \ref{fig:squaredCharFunctions}.  All of the distribution families in Table \ref{tab:distChar} span the range from the Kronecker delta at $k=0$ (the characteristic function of which is the constant function $\varphi_{\mathbb{P}}(t) = 1$) to the untruncated uniform distribution on $\mathbb{Z}$ (with characteristic function the discontinuous function which is zero everywhere except at integer multiples of $2\pi$, where it takes the value 1).  As all of these distributions are symmetric unimodal distributions centered at $k=0$, they all have broadly similar characteristic functions.  It may be noticed, however, that the characteristic function of the Cauchy-Lorentz distribution has cusps at integer multiples of $2\pi$, which is related to the fact that the higher even moments $\mathbb{E}[k^{2q}] = \sum_{k}k^{2q}\mathbb{P}(k)$ do not converge for $q\geq 1$.  
\begin{table}[ht]
	\caption{Some common probability distributions on $\mathbb{Z}$ and the associated characteristic functions.}
	\begin{tabular}{l|l|l}
		Distribution & $\mathbb{P}(k)$ & $\varphi_{\mathbb{P}}(t)$\\
		\hline
		Exponential & $\mathbb{P}(k) = \frac{\cosh(a)-1}{\sinh(a)}e^{-a|k|}$ & $\varphi_{\mathbb{P}}(t) = \frac{\cosh(a)-1}{\cosh(a)-\cos(t)}$\\
		Truncated Uniform & $\mathbb{P}(k) = \begin{cases}\frac{1}{2m+1} & |k|\leq m\\ 0 & \text{else}\end{cases}$ & $\varphi_{\mathbb{P}}(t) = \frac{\sin\left[\left(\frac{2m+1}{2}\right)t\right]}{(2m+1)\sin(t/2)}$\\
		Semicircular & $\mathbb{P}(k) = \begin{cases}\frac{\sqrt{(m+1)^{2}-k^{2}}}{\sum_{l=-m}^{m}\sqrt{(m+1)^{2}-l^{2}}} & |k|\leq m\\ 0 & \text{else}\end{cases}$ & $\varphi_{\mathbb{P}}(t) = \frac{\sum_{k=-m}^{m}e^{ikt}\sqrt{(m+1)^{2}-k^{2}}}{\sum_{k=-m}^{m}\sqrt{(m+1)^{2}-k^{2}}}$\\
		Cauchy-Lorentz & $\mathbb{P}(k) = \frac{\tanh(a\pi)}{\pi}\frac{a}{a^{2}+k^{2}}$ & $\varphi_{\mathbb{P}}(t) = \frac{\tanh(a\pi)}{\pi}\sum_{k=-\infty}^{\infty}\frac{a e^{ikt}}{a^{2}+k^{2}}$\\
		Binomial & $\mathbb{P}(k) = \begin{cases}2^{-2m}\binom{2m}{k+m} & |k|\leq m\\0 & \text{else}\end{cases}$ & $\varphi_{\mathbb{P}}(t) = 2^{-m}(1+\cos(t))^{m}$\\
		Normal & $\mathbb{P}(k) = \frac{e^{-a k^{2}}}{\sum_{l=-\infty}^{\infty}e^{-a l^{2}}}$ & $\varphi_{\mathbb{P}}(t) = \frac{\sum_{k=-\infty}^{\infty}e^{-a k^{2}}e^{ikt}}{\sum_{k=-\infty}^{\infty}e^{-a k^{2}}}$
	\end{tabular}
	\label{tab:distChar}
\end{table}

\end{document}